\documentclass[journal]{IEEEtran}

\usepackage{cite}
\usepackage{amsmath,amssymb,amsfonts}
\usepackage{marvosym}
\usepackage{algorithmic}
\usepackage{algorithm}
\usepackage{graphicx}
\usepackage{xcolor}
\usepackage{array}
\usepackage[caption=false,font=footnotesize,labelfont=rm,textfont=rm]{subfig}
\usepackage{textcomp}
\usepackage{stfloats}
\usepackage{url}
\usepackage{multirow}
\usepackage{verbatim}
\usepackage{float}  
\usepackage{overpic}
\usepackage{amsthm}
\usepackage{tabularx,booktabs,makecell,adjustbox}
\renewcommand\arraystretch{1.05}

\theoremstyle{plain} 
\newtheorem{theorem}{Theorem}[section]
\newtheorem{lemma}[theorem]{Lemma}

\theoremstyle{definition}

\theoremstyle{remark}

\begin{document}

\title{Plasticity-Enhanced Multi-Agent Mixture of Experts for Dynamic Objective Adaptation in UAVs-Assisted Emergency Communication Networks}

\author{Wen Qiu, \IEEEmembership{Member,~IEEE}, Zhiqiang He, \IEEEmembership{Member,~IEEE}, Wei Zhao, \IEEEmembership{Member,~IEEE}, Hiroshi Masui

\thanks{Wen Qiu is with the Department of Information and Communication Engineering, Kitami Institute of Technology, Japan. Email: w-qiu@ieee.org.}

\thanks{Zhiqiang He is with the Graduate School of Informatics and Engineering, the University of Electro-Communications, Japan. Email: hezhiqiang@ieee.org.}

\thanks{Wei Zhao is with the School of Computer Science and Technology, Anhui University of Technology, China. Emails: zhaowei@ahut.edu.cn.}
    
\thanks{Hiroshi Masui is with the Department of Information and Communication Engineering
and also Information Procession Center, Kitami Institute of Technology, Japan. Email: hgmasui@mail.kitami-it.ac.jp.}

\thanks{The corresponding author are Wei Zhao and Hiroshi Masui.}
}

\maketitle

\begin{abstract}
Unmanned aerial vehicles serving as aerial base stations can rapidly restore connectivity after disasters, yet abrupt changes in user mobility and traffic demands shift the quality of service trade-offs and induce strong non-stationarity. Deep reinforcement learning policies suffer from plasticity loss under such shifts, as representation collapse and neuron dormancy impair adaptation. We propose plasticity enhanced multi-agent mixture of experts (PE-MAMoE), a centralized training with decentralized execution framework built on multi-agent proximal policy optimization. PE-MAMoE equips each UAV with a sparsely gated mixture of experts actor whose router selects a single specialist per step. A non-parametric Phase Controller injects brief, expert-only stochastic perturbations after phase switches, resets the action log-standard-deviation, anneals entropy and learning rate, and schedules the router temperature, all to re-plasticize the policy without destabilizing safe behaviors. We derive a dynamic regret bound showing the tracking error scales with both environment variation and cumulative noise energy. In a phase-driven simulator with mobile users and 3GPP-style channels, PE-MAMoE improves normalized interquartile mean return by 26.3\% over the best baseline, increases served-user capacity by 12.8\%, and reduces collisions by approximately 75\%. Diagnostics confirm persistently higher expert feature rank and periodic dormant-neuron recovery at regime switches.
\end{abstract}

\begin{IEEEkeywords}
unmanned aerial vehicles, emergency communications, multiagent reinforcement learning, mixture of experts, continuous learning, non-stationary environments, resource allocation.
\end{IEEEkeywords}

\section{Introduction}
\IEEEPARstart{I}{n} the wake of natural disasters such as earthquakes, floods, or wildfires, terrestrial communication infrastructures are often knocked out. This collapse of connectivity not only delays the delivery of critical information but also hampers coordinated emergency response efforts. Recent advances in unmanned aerial vehicle (UAV)-assisted communication have introduced aerial base stations as a promising means of rapidly restoring coverage in such disrupted environments~\cite{11047530, SUN2023102200}. Enabled by their flexibility and line of sight propagation advantages, fleets of UAVs can dynamically reposition to serve mobile users across challenging terrains~\cite{wu2018joint}. However, achieving robust and efficient UAV coordination remains nontrivial in the presence of user mobility and dynamic demand shifts.

Multi-agent reinforcement learning (MARL) has emerged as a powerful tool for tackling complex coordination problems in UAV swarms, such as trajectory planning~\cite{foerster2018counterfactual}, energy aware task allocation~\cite{zhang2024distributed}, and real time spectrum sharing~\cite{badnava2021spectrum, yin2025deepthinkvla}. These methods typically rely on predefined reward structures to encode the optimization objectives, for instance maximizing user coverage or minimizing transmission latency. Yet in real world emergencies, users move and their communication demands evolve over time, which shifts the objective weights across quality of service (QoS) terms such as coverage, energy, and collision avoidance, thereby making the learning problem inherently non-stationary. Evidence from UAVs-assisted networking shows that time varying, spatially non-uniform traffic and user mobility are central drivers of these reweightings and must be explicitly modeled in policy design~\cite{9796953,lee2025slowsteadywinsrace}. These evolving priorities render the learning environment inherently non-stationary.

Although mixture of experts (MoE) architectures~\cite{shazeer2017outrageously,willi2024mixture} and gradient noise injection techniques~\cite{nikishin2022primacy,NEURIPS2024_978cc34c} have shown promise in preserving plasticity in streaming or continual learning contexts, their potential remains largely underexplored in UAVs-based multi-agent systems. In MARL, prevailing approaches to non-stationarity fall into three broad categories. (1) Replay and stabilization methods attach extra context to samples or down weight stale data, for example experience-replay “fingerprinting” and multi-agent importance sampling to make off policy updates usable~\cite{Foerster2017StabilisingER}. (2) Methods that model or shape opponents and teammates explicitly anticipate other learners’ updates; notable examples include learning with opponent-learning awareness (LOLA) and proximal LOLA variants, which improve coordination but presume access to or accurate prediction of others’ learning dynamics~\cite{Foerster2018LOLA, Zhao2022POLA}. (3) Meta RL learns priors or update rules for quick task switching, which requires training over representative task distributions and can be brittle when mission phases deviate from the meta train manifold ~\cite{finn2017maml}. UAV-specific work typically tackles dynamics via receding horizon, heuristic replanning, or standard deep reinforcement learning (DRL) controllers tuned per scenario, with limited treatment of abrupt objective re-weighting and representation degradation during long runs. Recent surveys highlight that while DRL and MARL are increasingly used for trajectory and resource control, robust handling of phase level nonstationarity remains a gap~\cite{Li2024MARLSurvey, 11047530}. 

In this paper, we introduce the plasticity enhanced multi-agent mixture of experts (PE-MAMoE), a MARL framework designed for rapidly evolving UAV-assisted communication tasks. Our approach diverges from the above in two ways. First, instead of opponent aware updates or off policy replay fixes, we target the representation failure mode directly. PE-MAMoE preserves expressivity in a sparsely gated MoE actor via expert only stochastic perturbations injected briefly after phase switches, combined with lightweight router temperature and entropy scheduling to prevent expert under utilization. The entire mechanism is fully compatible with on-policy multi-agent PPO (MAPPO). Second, rather than meta training on many tasks, we operate within a single, phase driven ECN where weights over QoS terms change online. Our method maintains plasticity so the same policy can track shifting optima under user mobility and time varying demand, filling the UAV MARL gap where prior methods either assume stationary objectives or lack mechanisms to prevent rank collapse and neuron dormancy during long term operation~\cite{PrudentQLearning2022}. 

We further provide a theoretical analysis of PE\text{-}MAMoE, deriving a dynamic regret bound that characterizes the trade-off between active forgetting and knowledge retention in non-stationary environments. The bound shows that the tracking error scales with both the path length of the evolving optimal objective sequence and the cumulative noise energy from stochasticity injection, offering principled guidance for tuning injection schedules. We validate our framework in a phase-driven simulator with mobile users, fluctuating interference, and 3GPP-style channel modeling. Compared with strong baselines, multilayer perceptron (MLP), MoE, and Sparse MoE, PE\text{-}MAMoE improves normalized interquartile mean (IQM) return by \textbf{+26.3\%} over the best baseline, increases served-user capacity by \textbf{+12.8\%}, and reduces collisions by \textbf{$\approx$75\%}. It also maintains the highest expert feature rank with periodic recovery and exhibits phase-synchronous troughs in dormant neuron fraction, indicating effective re-plasticization under sparse MoE routing.

The main contributions of this paper are as follows.
\begin{itemize}
\item \textbf{Plasticity enhanced MARL for UAV–ECNs.} We introduce PE\mbox{-}MAMoE, a MAPPO-based controller that marries sparsely gated mixture of experts with controlled stochasticity to maintain plasticity and avoid expert collapse under phase driven non-stationarity. This leverages conditional computation for scalable capacity while remaining stable in cooperative MARL. 
\item \textbf{Realistic, phase driven simulator.} We build a high fidelity UAV–ECN environment with user mobility, demand shifts, clustered reuse and adjacent channel leakage, enabling reproducible evaluation under abrupt objective switches.  
\item \textbf{Plasticity diagnostics and robust evaluation.} We use effective rank and dormant neuron fraction to quantify representation collapse and plasticity across switches, linking empirical behavior to primacy bias remedies via reset and perturb strategies, and using IQM with thorough ablations, show that PE-MAMoE achieves higher return, capacity, and stability than MLP, MoE, and Sparse MoE in fast switching regimes.
\end{itemize}

\section{Related Work}

We structure this section into three subsections to better clarify problem setting, failure mechanisms, and architectural strategy, and to make our contributions explicit by contrast. Multi-UAV coordination for emergency communications describes the application domain and operational objectives that define the evaluation criteria for our setting. Plasticity loss in neural decision systems integrates evidence on representation collapse and dormant units, and explains why extending plasticity preservation from single agent to multi agents learning introduces additional failure modes, non-stationarity from concurrently adapting teammates, coordination conventions that harden too early, and interference under parameter sharing. MoE and Modular Architectures outline modular designs that can mitigate such interference by decoupling capacities and localizing updates. This organization clarifies our contribution: within a highly dynamic UAV–ECN, we introduce PE-MAMoE, a sparsely gated actor with expert local re-plasticization and temperature scheduled routing, integrated with MAPPO, to sustain team level adaptability under abrupt objective and mobility mechanism shifts while avoiding rank collapse and neuron dormancy.

\subsection{Multi-UAV Coordination for Emergency Communications}
UAVs have emerged as a rapid deployment solution for maintaining connectivity when ground infrastructure fails, such as after natural disasters. Acting as aerial base stations, UAVs can swiftly restore wireless coverage in affected areas~\cite{sun2023uav}. Prior work in this domain has tackled core optimization problems including three dimensional (3D) UAV placement and trajectory design, transmit power control, and dynamic user association. For instance, researchers have studied how to jointly optimize UAV flight paths, user-UAV associations, and resource allocation to maximize coverage or network throughput under fairness and energy constraints~\cite{wu2018joint}. These problems are intrinsically complex, both non-convex and combinatorial, and they often demand real time solutions owing to a highly dynamic environment characterized by user mobility and shifting network demands~\cite{10758641, foerster2018counterfactual}. Traditional optimization techniques struggle in this setting, especially without a central controller, motivating the use of RL and multi-agent coordination. Indeed, distributed multi-agent deep reinforcement learning (MADRL) approaches have been proposed to allow UAVs to learn cooperative policies for positioning and resource allocation on the fly~\cite{badnava2021spectrum}. These approaches enable each UAV to act as an agent, adjusting its 3D position and service strategy based on local observations, and have demonstrated superior performance over static or short sighted baselines in terms of achieved throughput and fairness~\cite{hussain2024computing}.

Several broad surveys provide overviews of UAVs-assisted wireless networks and their challenges~\cite{hussain2024computing,zhang2023survey}. Common themes include ensuring robust coverage, interference management, energy efficiency, and integration with existing cellular infrastructure. Wu \emph{et al.} present tutorial discussions on using UAVs for wireless coverage extension in 5G and beyond, highlighting both opportunities and open problems in UAV communications~\cite{wu2018joint}. To address these challenges, a variety of techniques have been explored: from metaheuristic algorithms for UAV placement to model based control for interference mitigation and energy aware scheduling~\cite{sun2023uav}. More recently, single agent and MARL frameworks have been applied to UAV network control. For example, Q learning agents have been used for adaptive UAV deployment and movement, showing the ability of UAVs to learn 3D placements that maximize users’ quality of experience under mobility~\cite{badnava2021spectrum}. Multi-agent policy gradient methods like counterfactual multi-agent policy gradients have also been employed to coordinate multiple UAVs in coverage tasks~\cite{foerster2018counterfactual}. These learning based approaches outperform static heuristics by enabling UAVs to continuously re-position and re-configure according to live network conditions.

However, most existing UAV communication strategies assume a fixed or slowly varying mission objective, for example, maximize coverage of a given area or serve a set of users with known demand. In MARL, mainstream responses to non-stationarity fall into three families. (i) Replay methods attach additional context to samples or down weight stale data so that off policy updates remain usable, for example, ``fingerprinting" identifiers in experience replay to reduce non-stationarity across agents~\cite{Foerster2018LOLA}. (ii) Opponent aware updates explicitly model others’ learning, for example learning with opponent learning awareness, and proximal or consistent variants, which can improve coordination but presume access to or accurate prediction of others’ updates~\cite{Zhao2022POLA}. (iii) Meta RL for rapid adaptation learns priors rules that speed task switching, yet requires representative task families and may degrade when phases deviate from the meta train manifold~\cite{finn2017maml}. Surveys emphasize that dealing with non-stationarity remains a central open issue in MADRL, despite these advances~\cite{Li2024MARLSurvey, 11047530}. In UAV communications specifically, most works react to dynamics via receding horizon re-planning or task specific DRL tuned per scenario, but provide limited treatment of abrupt objective re-weighting and representation collapse during long deployments, a gap repeatedly noted by domain surveys synthesizing UAV networking under mobility, interference, and energy constraints. 

Our approach diverges from the above lines in two ways. First, rather than fixing off-policy instability or modeling opponents, we target the representation failure mode directly. PE-MAMoE preserves expressivity in a sparsely gated MoE actor via short-window, expert-only stochastic perturbations after phase switches, while keeping the router stable and lightly temperature-scheduled. This design is fully compatible with on-policy MAPPO. Second, instead of meta-training across many tasks, we operate within a single, phase-driven ECN where QoS weights change online. The same policy tracks shifting optima under user mobility and time-varying demand by maintaining plasticity, addressing the UAV-MARL gap where prior methods either assume stationary objectives or lack mechanisms to prevent rank collapse and neuron dormancy.

\subsection{Plasticity Loss in Neural Decision Systems}
Continual learning and long running RL agents often suffer from plasticity loss, a reduction in the ability to acquire new knowledge over time~\cite{dohare2024nature}. As training progresses, neural networks tend to gradually and irreversibly lose their plasticity, meaning the model becomes stable but rigid, struggling to adapt to new tasks or changes~\cite{dohare2024nature}. Recent studies have characterized this phenomenon by observing increasing fractions of dormant neurons, units that effectively stop updating or firing, and a collapse in the network’s feature rank or representational diversity~\cite{lyle2023understanding}. In DRL settings, for example, standard agents have been shown to eventually settle into narrow representations: many neurons become inactive and the learned features span a lower dimensional subspace, hampering the agent’s ability to continue learning. This loss of plasticity is closely tied to the stability–plasticity dilemma and has been identified as a culprit behind both catastrophic forgetting and the stagnation of performance on new tasks~\cite{abbas2023loss}.

Researchers have begun quantifying and addressing plasticity loss in neural decision systems. For instance, Sokar \emph{et al.} highlight a “dormant neuron phenomenon” in DRL, where many units drop out of activity during training, correlating with diminished learning capacity~\cite{sokar2023dormant}. Lyle \emph{et al.} further link plasticity loss to a collapse in feature space dimensionality and show how this correlates with the agent becoming over committed to its early policy patterns~\cite{lyle2023understanding}. To counteract these issues, several mitigation strategies have been proposed. One line of attack is meta learning, which trains models to be inherently more adaptable~\cite{finn2017maml,beattie2022metalearning}. By learning an initialization or update rule that remains plastic, meta learning approaches enable quick learning on new tasks, thereby counteracting the tendency to get stuck on past knowledge. Another set of techniques focuses on injecting variability or perturbations during training to avoid the network settling into a complacent state. Nikishin \emph{et al.} identify a primacy bias in deep RL and propose periodically resetting a portion of the agent’s parameters to combat this bias~\cite{nikishin2022primacy}. Similarly, shrink and perturb methods introduce small random perturbations to weights, preserving diversity and preventing dormant neurons~\cite{dohare2024nature}. Despite these advances, most have been studied in single-agent or supervised contexts. Applications to multi-agent coordination problems remain sparse. 

Extending plasticity preserving ideas from single agent to multi agents settings introduces additional failure modes: (i) non-stationarity from concurrently learning teammates or opponents, which destabilizes value and policy updates; (ii) multi-agent credit assignment and convention formation, where agents may overfit to early coordination patterns; (iii) parameter interference under parameter sharing, which can compress representation rank and reduce plasticity; and (iv) architecture specific dormancy in MARL modules that mirrors single agent dormant units. Recent work explicitly documents dormant neurons in MARL value factorization~\cite{qin2024dormant_marl}; surveys and algorithms targeting non-stationarity and coordination further underline these challenges~\cite{papoudakis2019, christianos2021scaling, wang2020roma}. In parallel, continual multi-agent studies show that teams can forget prior coordination skills across regime shifts, pointing to a need for team level plasticity maintenance~\cite{10562331}. Motivated by these findings, we instantiate a plasticity preserving approach for UAVs based multi-agent coordination, PE-MAMoE, that couples modular experts with stochasticity based plasticity injection to sustain adaptability under abrupt user demand and mobility regime shifts, extending plasticity loss remedies beyond single agent and supervised settings to MARL for ECNs.

\subsection{MoE and Modular Architectures}
MoE architectures have grown in popularity as a way to scale up models while handling task diversity. The MoE concept, originally proposed in the 1990s, involves a number of expert subnetworks and a gating mechanism that routes each input to one or a few experts~\cite{jacobs1991adaptive}. Shazeer \emph{et al.} revitalized this idea in deep learning (DL) by introducing a sparsely gated MoE layer that enabled extremely large neural networks with manageable computation~\cite{shazeer2017outrageously}. The key is that only a small subset of experts is active for any given input, so the model’s capacity grows while inference cost remains relatively constant. This architecture offers a natural way to handle heterogeneous tasks or data modes. Recent studies in deep RL have shown that augmenting agents with MoE layers significantly improves learning capacity and reduces dormant neurons~\cite{he2024moe}. Several advances have been made to address specific challenges in MoE systems. One focus area is routing sparsity and efficiency. Switch Transformers simplified the MoE gating by routing each input to exactly one expert, enabling trillion parameter models with moderate cost~\cite{fedus2022switch}. Another concern is expert utilization and diversity. Without intervention, experts may be underused or converge to similar functions, undermining modularity~\cite{zhang2022vidhoc}. Diversity promoting regularization and load balancing techniques have been proposed to ensure distinct and balanced expert usage. A related line of research addresses controlled forgetting and lifelong adaptation in modular networks. For example, Lee \emph{et al.} propose gradual expert updates to preserve old knowledge while learning new tasks~\cite{lee2025slowsteadywinsrace}. 

Despite these advances, integration of MoE architectures with MARL and wireless networking problems remains largely unexplored. To our knowledge, there are no prior studies that use MoEs to handle non-stationary objectives in a multi-UAV coordination scenario. Our work is among the first to introduce a MoE design into MARL for UAV communications, enabling UAV teams to effectively switch strategies on the fly without incurring plasticity loss.

\section{Problem Statement}

We adopt a typical multi-UAV as aerial base station (BS) model over a square disaster area with slotted time and 3D motion characteristics because it is (i) representative of the dominant formulation in UAV communications, capturing the essential trade offs among coverage, interference, and energy, yet (ii) sufficiently compact to admit clean signal-to-interference-plus-noise ratio (SINR) expressions, propulsion and communication power accounting, and a Markov state action interface suited to MAPPO. Meanwhile, the time varying user set $\mathcal{N}(t)$ and user mobility can simulate realistic load and topology changes. Our goal is not to innovate on geometry itself, but to provide a faithful, extensible backbone on which to evaluate team level plasticity under system state shifts. The novelty lies in how the model is used: we embed phase driven QoS objective reweighting and demand and mobility pattern changes into the same environment, yielding abrupt but structured non-stationarity that stress representation plasticity without introducing with any special modeling techniques. This design allows direct comparison to standard baselines by disabling phases while isolating the effect of the proposed PE-MAMoE mechanisms on adaptation, stability, and sample efficiency.

\subsection{System Model}

Table~\ref{tab:notation} summarizes the key symbols used throughout the paper for easy reference.

\begin{table}[!t]
\centering
\scriptsize
\renewcommand{\arraystretch}{1.15}
\caption{Summary of Key Notation}
\label{tab:notation}
\begin{tabular}{p{0.15\columnwidth} p{0.75\columnwidth}}
\toprule
Symbol & Description \\
\midrule
\multicolumn{2}{l}{\textit{System model}} \\
$\Omega$           & Disaster area $[0,S]^2$ \\
$\mathcal{U}$, $U$ & UAV set and number of UAVs \\
$\mathcal{N}(t)$, $N$ & Active ground user set and total users \\
$\mathbf{q}_u(t)$  & 3D position of UAV $u$ at time $t$ \\
$\Delta t$         & Time slot duration \\
$P_{\max}$, $p_{n,u}$ & Max transmit power; power allocated to user $n$ by UAV $u$ \\
$g_{n,u}(t)$       & Channel power gain from UAV $u$ to user $n$ \\
$F$, $\rho_{\text{adj}}$ & Frequency reuse factor; adjacent-channel leakage \\
$B$, $B_u(t)$      & Total bandwidth; per-user bandwidth of UAV $u$ \\
$R_{n,u}(t)$       & Achievable rate for user $n$ served by UAV $u$ \\
$E_u^{\mathrm{prop}}(t)$ & Propulsion energy of UAV $u$ in slot $t$ \\
\midrule
\multicolumn{2}{l}{\textit{Phase and mobility model}} \\
$\phi_t$           & Phase index $\in\{0,1,2\}$ \\
$L$               & Phase duration (number of steps) \\
$d_n(t)$          & Demand class of user $n$: $\{L,M,H\}$ \\
$\mu_t^{\text{qoe}},\mu_t^{\text{ene}},\mu_t^{\text{col}}$ & Phase-dependent objective weights \\
$\kappa$, $\alpha$  & RPGM follow gain; GM memory coefficient \\
\midrule
\multicolumn{2}{l}{\textit{PE-MAMoE architecture}} \\
$E$, $k$          & Number of experts; top-$k$ routing parameter \\
$W_r$, $z_t^i$, $g_t^i$ & Router weights, logits, and gating probabilities \\
$\tau$             & Router temperature \\
$\theta_j$         & Parameters of expert $j$ \\
$\gamma_t^{(e)}$   & Expert noise injection scale \\
$N$               & Noise injection window (epochs after switch) \\
$\pi_{\text{ref}}$ & Frozen reference policy for KL stabilization \\
\bottomrule
\end{tabular}
\end{table}

We consider a disaster stricken square area $\Omega=[0,S]^2$, where the communication infrastructure is severely damaged, and multiple UAVs are deployed as aerial BSs to provide emergency communication coverage. Time is slotted by $t\in \{0,1,2,\dots \}$ with slot length $\Delta t$. Let $\mathcal{U}={1,\dots,U}$ be the set of UAVs. The 3D position and velocity of UAV $u\in\mathcal{U}$ at time $t$ are
\begin{align}
\mathbf{q}_u(t) &= [x_u(t), y_u(t), h_u(t)]^{\top}, \quad \forall u \in \mathcal{U}, \nonumber \\
\mathbf{v}_u(t) &= [v_{h,u}(t), v_{v,u}(t)]^{\top}.
\end{align}

The kinematics follow a discrete time update
\begin{equation}
    \mathbf{q}_u(t+1) = \mathbf{q}_u(t) + \mathbf{v}_u(t) \Delta t.
\end{equation}

subject to flight constraints
$$
0 \leq v_{h, u}(t) \leq v_{h,u}^{max}, 0 \leq v_{v, u}(t) \leq v_{v,u}^{max}.
$$
Each UAV transmits with power $P_u(t)\in[0,P_{\max}]$ (downlink).

Let $\mathcal{N}(t)={1,\dots,N_t}$ denote the active ground users (GUs) set at time $t$. The position of user $n\in\mathcal{N}(t)$ are $\mathbf{s}_{n}(t) = [x_n(t), y_n(t), z_n]^{\top}$
with motion update $\mathbf{s}_n(t{+}1)=\mathbf{s}_n(t)+\mathbf{v}_n(t)\Delta t$, where $\mathbf{v}_n(t)$ is the horizontal velocity.

\subsection{Channel Model}
\label{subsec:channle model}
For any UAV $u$ and user $n$ at time $t$, denote the horizontal distance by
$
d_{n,u}(t)=\big\|[x_n(t)-x_u(t),\,y_n(t)-y_u(t)]\big\|_2,
$
the relative height by $h_{n,u}(t)=h_u(t)-z_n$, and the $3$D distance by
$
r_{n,u}(t)=\sqrt{d_{n,u}^2(t)+h_{n,u}^2(t)}.
$

\paragraph{LoS probability}
We adopt a distance dependent and height dependent line of sight (LoS) probability
\begin{align}
P_{\mathrm{LoS}}(r,h) =
\begin{cases}
1, & r \le 18~\text{m}, \\[6pt]
\begin{aligned}
& \dfrac{18}{r} \;+\; e^{-r/36} \left(1 - \dfrac{18}{r} \right) \\
& \qquad \cdot \left(1 + C_0(h)\right)
\end{aligned}, & r > 18~\text{m}.
\end{cases}
\label{eq:plos}
\end{align}
with a height correction
$
C_0(h)=\bigl(\max\{h-13,\,0\}/10\bigr)^{1.5}.
$
We clip $P_{\mathrm{LoS}}\in[0,1]$. This follows a 3GPP UMi-style form with an additional elevation-dependent term to capture low altitude air-to-ground (A2G) links~\cite{Yan2019AccessUAVSurvey, 3gppTR38901v1610}.

\paragraph{Large scale path loss (dB)}
Conditioned on the LoS and non-LoS (NLoS) state, the path loss is modeled as
\begin{align}
&\mathrm{PL}_{\mathrm{LoS}}(r,f_c) \;=\; 28 \;+\; 22\log_{10}(r)\;+\;20\log_{10}(f_c), \label{eq:pl_los}\\
&\mathrm{PL}_{\mathrm{NLoS}}(r,f_c) \;=\; \mathrm{PL}_{\mathrm{LoS}}(r,f_c)\;+\;13\;+\;0.1\,r^{0.25},
\label{eq:pl_nlos}
\end{align}
where $r$ is in meters. The linear large scale gain is $10^{-\mathrm{PL}/10}$.\footnote{The LoS base term \eqref{eq:pl_los} matches the 3GPP UMi slope $28{+}22\log_{10}(r){+}20\log_{10}(f_c)$; the NLoS offset and weak distance exponent refinement in \eqref{eq:pl_nlos} follow our implementation to penalize NLoS links.}

\paragraph{Small scale fading and shadowing}
Small scale fading is Rician under LoS and Nakagami-$m$ under NLoS:
\begin{equation}
h_{\text{small}} \sim 
\begin{cases}
\text{Rician}(K\!\approx\!6\text{ dB}), & \text{LoS},\\
\text{Nakagami-}m~(m{=}m_{\text{NLoS}}), & \text{NLoS},
\end{cases}
\end{equation}
and multiplicative lognormal shadowing is applied with standard deviations $\sigma_{\text{LoS}}$/$\sigma_{\text{NLoS}}$ in dB. Let $G_{\mathrm{TX}},G_{\mathrm{RX}}$ be antenna gains. The per-link power gain is
\begin{equation}
g_{n,u}(t)\;=\;G_{\mathrm{TX}}G_{\mathrm{RX}}\;\bigl|\,h_{\text{small}}(t)\bigr|^2\cdot 10^{-\mathrm{PL}(r_{n,u}(t),f_c)/10}.
\label{eq:power_gain}
\end{equation}

\paragraph{Clustered reuse and interference}
We employ frequency reuse factor $F\!\in\!\mathbb{N}$ across UAVs: each UAV $u$ is assigned a color $c(u)\!\in\!\{0,\dots,F{-}1\}$.
Let $P_u(t)$, bounded by $P_{\max}$, be the total downlink power used by UAV $u$ and $p_{n,u}(t)$ the portion allocated to user $n$, $p_{n,u}(t)=0$ if $n$ is not served by $u$. The co-channel interference at user $n$ when served by UAV $u$ is
\begin{align}
I_{n,u}(t) =\;
\begin{aligned}[t]
&\sum_{\substack{u' \ne u \\ c(u') = c(u)}}\! P_{u'}(t)\,g_{n,u'}(t) \\
&\quad +\; \rho_{\text{adj}} \sum_{\substack{u' \\ c(u') \ne c(u)}}\! P_{u'}(t)\,g_{n,u'}(t),
\end{aligned}
\label{eq:interference}
\end{align}
where $\rho_{\text{adj}}\!\in\![0,1)$ models adjacent-channel leakage.

\paragraph{Bandwidth allocation, noise and SINR}
Each color receives bandwidth $B/F$. If UAV $u$ concurrently serves $K_u(t)$ users, their per user bandwidth is $B_u(t)=\frac{B}{F\,K_u(t)}$. With one sided noise spectral density $N_0$, the noise power is $N_0 B_u(t)$. The instantaneous SINR and achievable rate for user $n$ (served by $u$) are
\begin{align}
&\mathrm{SINR}_{n,u}(t) \;=\; \frac{p_{n,u}(t)\,g_{n,u}(t)}{N_0 B_u(t) + I_{n,u}(t)}, \label{eq:sinr}\\
&R_{n,u}(t) \;=\; \alpha\, B_u(t)\,\log_2\!\bigl(1+\mathrm{SINR}_{n,u}(t)\bigr), \label{eq:rate}
\end{align}
where $\alpha\!\in\!(0,1]$ captures implementation loss and coding gap. 

\paragraph{Feasibility driven power assignment}
Given a target rate $R_n^*(t)$ chosen from the demand level, the required SINR is
$
\gamma_n^*(t)=2^{\,R_n^*(t)/(\alpha B_u(t))}-1.
$
To meet $R_n^{*}(t)$ under conservative interference $I_{n,u}(t)$, the transmit power allocated to $n$ is
\begin{align}
p_{n,u}(t) &= \frac{\gamma_n^{*}(t)\,\bigl(N_0 B_u(t) + I_{n,u}(t)\bigr)}{\max\{g_{n,u}(t), \varepsilon\}},
\label{eq:power_req} \\
& \text{s.t.} \quad \sum_{n} p_{n,u}(t) \le P_{\max}
\nonumber
\end{align}
with a small $\varepsilon$ for numerical stability. Users are greedily admitted based on demand level and link efficiency until either the power budget of each UAV or the maximum number of users served by the UAV is met; finally (\ref{eq:sinr})–(\ref{eq:rate}) yield the realized SINR and rate.

\subsection{Energy Consumption Model}\label{subsec:energy_model}
Each UAV is powered by a finite onboard battery. The energy expenditure of UAV $u$ is decomposed as
\begin{equation}
    E_u(t) \;=\; E_u^{\mathrm{prop}}(t) \;+\; E_u^{\mathrm{comm}}(t).
\end{equation}

\paragraph{3D rotary wing propulsion power}
Let $V_h\!\ge 0$ and $V_z\!\in\mathbb{R}$ denote the horizontal and vertical speeds.
We adopt the rotary wing 3D propulsion power model~\cite{Yan2021NewECMRotaryWing, Zeng2019RotaryEnergyMinimization}:
\begin{align}
\label{eq:3d_power}
P_{\mathrm{rot}}(V_h,V_z)
&= P_0\!\left(1+\frac{3V_h^2}{U_{\text{tip}}^{\,2}}\right) \\
&\quad +\, P_i\!\Biggl(
\begin{aligned}[t]
&\sqrt{1+\frac{(V_h^2+V_z^2)^2}{4v_0^4}}
\\[-2pt]
&\;-\; \frac{V_h^2+V_z^2}{2v_0^2}
\end{aligned}
\Biggr)^{\!1/2} \nonumber\\
&\quad +\, \frac{1}{2}\,d_0\,\rho\,s\,A\,V_h^{3}. \nonumber
\end{align}
Here $P_0$ and $P_i$ are the profile and induced power coefficients, $U_{\text{tip}}$ is the blade tip speed, $v_0$ is the hover induced speed, and $d_0,\rho,s,A$ denote the fuselage drag coefficient, air density, rotor solidity and rotor disk area, respectively.
The hover power is $P_{\mathrm{rot}}(0,0)=P_0+P_i$.

\paragraph{Slot wise propulsion energy}
Within one slot, the controller executes horizontal and vertical movements sequentially. 
Let the realized displacements between the beginning and end of slot $t$ be $s_h(t)$ (horizontal) and $s_v(t)$ (vertical magnitude). 
Given commanded speeds $|V_h(t)|>0$ and $|V_z(t)|>0$, the motion times satisfy
\begin{align}
t_h(t) &= \frac{s_h(t)}{|V_h(t)|}, \nonumber\\[3pt]
t_v(t) &= \frac{s_v(t)}{|V_z(t)|}, \nonumber\\[3pt]
t_{\mathrm{move}}(t) &= t_h(t) + t_v(t) \;\le\; \Delta t.
\label{eq:time_relation}
\end{align}
and the remaining time is hovering $t_{\mathrm{hov}}(t)=\Delta t-t_{\mathrm{move}}(t)$.\footnote{If $t_{\mathrm{move}}(t)>\Delta t$, the action is declared infeasible in our implementation.}
Hence the propulsion energy in slot $t$ is the sum of three segments:
\begin{align}
E_u^{\mathrm{prop}}(t)
&= P_{\mathrm{rot}}\!\big(V_h(t),0\big)\,t_h(t) \\
&\quad +\, P_{\mathrm{rot}}\!\big(0,V_z(t)\big)\,t_v(t) \nonumber\\
&\quad +\, P_{\mathrm{rot}}(0,0)\,t_{\mathrm{hov}}(t).\nonumber
\label{eq:slot_prop_energy}
\end{align}

\paragraph{Battery update and budget}
Let $B_u(t)$ be the remaining battery energy at the beginning of slot $t$. The state update is
\begin{equation}\label{eq:battery_update}
B_u(t{+}1)=B_u(t)-E_u^{\mathrm{prop}}(t),\qquad 0\le B_u(t)\le B_u^{\max}.
\end{equation}
with terminal and mission feasibility requiring $B_u(t)\ge 0$ for all $t$. In this work we neglect the communication related energy on the UAV side, since extensive studies show that, for rotary UAVs, the propulsion power overwhelmingly dominates the on-board energy budget, while radio and processing energy is typically negligible in comparison; hence “UAV energy” is taken to mean flight energy throughout the paper~\cite{Wang2024RISUAV}.

\subsection{User Mobility and Phase Model}

We consider a population of $N$ GUs, indexed by $\mathcal{N}=\{1,\dots,N\}$, whose active set at time $t$ is $\mathcal{N}(t)\subseteq\mathcal{N}$. 

\paragraph{Hotspots}
Users are partitioned evenly into $G$ hotspots, $\mathcal{G}=\{1,\dots,G\}$, with fixed memberships $\{\mathcal{N}_g\}_{g=1}^G$ and $|\mathcal{N}_g|=N/G$.
Each hotspot has a center $\mathbf{c}_g(t)\in\Omega$ and a waypoint $\mathbf{p}_g(t)\in\Omega$.
Centers migrate with constant speed $u_g$ toward their targets and periodically retarget every $T_{\text{ret}}$ steps:
\begin{equation}
\mathbf{c}_g(t{+}\Delta t)
=\mathbf{c}_g(t)+u_g\,
\frac{\mathbf{p}_g(t)-\mathbf{c}_g(t)}{\|\mathbf{p}_g(t)-\mathbf{c}_g(t)\|+\varepsilon}\,\Delta t,
\label{eq:center}
\end{equation}
which follows the ``logical group center and individual perturbations" principle of the Reference Point Group Mobility (RPGM) model. Group membership is locked, ensuring each hotspot always hosts $N/G$ GUs~\cite{Camp2002MobilitySurvey,Hong1999RPGM}.

\paragraph{User mobility}
For user $n\in\mathcal{N}_g$, denote position $\mathbf{x}_n(t)\in\Omega$ and velocity $\mathbf{v}_n(t)\in\mathbb{R}^2$.
We use a Gauss Markov (GM) velocity process with memory $\alpha\in[0,1)$ and noise scale $\sigma>0$, blended with RPGM following:
\begin{equation}\label{eq:gm}
\begin{split}
\mathbf{v}_n^{\mathrm{GM}}&(t{+}\Delta t)
  = \alpha\,\mathbf{v}_n(t)
   + (1{-}\alpha)\,v_0\frac{\mathbf{v}_n(t)}{\|\mathbf{v}_n(t)\|+\varepsilon} \\
  &\quad + \sqrt{1{-}\alpha^2}\,\sigma\,\boldsymbol{\xi}_n(t),\qquad \boldsymbol{\xi}_n(t)\sim\mathcal{N}(\mathbf{0},\mathbf{I}_2)
\end{split}
\end{equation}
\begin{equation}\label{eq:rpgm}
\begin{split}
\mathbf{v}_n^{+}(t)
  &= (1{-}\kappa)\,\mathbf{v}_n^{\mathrm{GM}}(t)
   + \kappa\,v_0\frac{\mathbf{c}_g(t)-\mathbf{x}_n(t)}
                    {\|\mathbf{c}_g(t)-\mathbf{x}_n(t)\|+\varepsilon} \\
  &\quad + \boldsymbol{\eta}_n(t),\qquad \boldsymbol{\eta}_n(t)\sim\mathcal{N}(\mathbf{0},\sigma_f^2\mathbf{I}_2)
\end{split}
\end{equation}
\begin{equation}\label{eq:pos}
\mathbf{x}_n(t{+}\Delta t)
  = \Pi_{\Omega}\!\big(\mathbf{x}_n(t)+\mathbf{v}_n^{+}(t)\Delta t\big).
\end{equation}
where $\kappa\in[0,1]$ is the follow gain and $\Pi_{\Omega}(\cdot)$ denotes reflecting boundaries~\cite{Camp2002MobilitySurvey,Hong1999RPGM}.

To reflect phase driven non-stationarity, these parameters are phase dependent: in low demand phases we use slower motion with smaller drift $v_0^{\text{low}}$, noise $\sigma^{\text{low}}$ and a weaker group following coefficient $\kappa^{\text{low}}$; in medium and high demand phases we gradually increase nominal speed and directional coherence, that is, larger $v_0^{\text{med/high}}$ and $\kappa^{\text{med/high}}$, and allow stronger exploration via $\sigma^{\text{med/high}}$. 

\paragraph{Phase model}
We partition time into a fixed number of phases $\phi_t\!\in\!\{0,1,2\}$ with
equal duration $L$; after every $L$ steps the environment switches to the next
phase in a cyclic manner:
$$
\phi_t \;=\; \big\lfloor t/L \big\rfloor \bmod 3 . \nonumber
$$
The phase index $\phi_t$ is used only by the simulator to set environment parameters; it is \emph{not} included in any agent's observation. Agents must therefore infer the current regime from the observable consequences of each phase, namely the user demand distribution and spatial mobility patterns.
Each phase instantiates a distinct mobility demand regime. On the
mobility side, users follow the GM and RPGM blended kinematics with phase specific parameters
$$
v_0\!=\!v_0(\phi_t),\quad
\alpha\!=\!\alpha(\phi_t),\quad
\sigma\!=\!\sigma(\phi_t),\quad
\kappa\!=\!\kappa(\phi_t), \nonumber
$$
so that the low demand phase ($\phi\!=\!0$) uses smaller nominal speed $v_0$ and
directional noise $\sigma$, while $\phi\!=\!1,2$ progressively increase crowd
speed and directional coherence to emulate evacuation and aggregation effects in
disaster scenes. On the traffic side, each user $n$ holds a discrete demand
class $d_n(t)\!\in\!\{L,M,H\}$ whose targets and weights are phase wise
constants:
$$
R^{\mathrm{tar}}_{d}(t) = r_{d}(\phi_t),\qquad
w_{d}(t) = w_{d}(\phi_t),\qquad d\in\{L,M,H\}. \nonumber
$$
Accordingly, the per user QoS threshold used by the scheduler is
$$
R_n^{\min}(t)= r_{d_n(t)}(\phi_t),\qquad 0<r_L(\phi_0) < r_M(\phi_0) \ll r_H(\phi_0), \nonumber
$$
with $r_d(\phi)$ and $w_d(\phi)$ chosen to reflect phase priorities.

\subsection{Research Objective}
\label{subsec:research_objective}

According to the above, we cast the cooperative control of multi-UAV ECNs as a single step rolling optimization. At each decision step $t$, the controller selects UAV motion decisions so as to (i) maximize demand aware QoS utility, crediting a UAV only when a served user meets its demand class rate threshold, and (ii) minimize propulsion energy and safety penalties. QoS is treated as a soft objective via indicator terms inside the sum, which lets us reason algebraically about ``demand satisfied" and ``demand unsatisfied" events without introducing hard feasibility constraints. The UAV energy term accounts only for flight propulsion, following the common observation that communication energy is negligible compared to propulsion for rotary or fixed wing UAVs. We use the indicator bracket $[\,\cdot\,]$ to embed logical conditions into sums; it evaluates to $1$ when the statement is true and $0$ otherwise.

At each decision step $t$, the controller must jointly select the pose $q_u \in \mathbb{R}^3$ for each UAV $u$. A binary association matrix $a_{n,u}\!\in\!\{0,1\}$ indicating whether user $n\!\in\!\mathcal N$ is served by UAV $u$; and non-negative per-link bandwidth and power allocations $b_{n,u}\!\ge\!0$, $p_{n,u}\!\ge\!0$. These decisions feed the objective in \eqref{obj:system} and are constrained by the kinematics, spectrum, and power budgets mentioned below.

\newcommand{\iverson}[1]{\left[\,#1\,\right]}
\begin{align}
\label{obj:system}
\max_{q,v,a,b,p}\;\;
&\underbrace{\mu_t^{\mathrm{qoe}}
\sum_{n\in\mathcal N}\sum_{u\in\mathcal U}
a_{n,u}\; w_{d_n(t)}\!\left(
\iverson{R_{n,u}\ge r_{d_n(t)}}
\right)}_{\text{demand-aware QoS utility}}
\nonumber\\
&-\;\underbrace{\mu_t^{\mathrm{ene}}\!\sum_{u\in\mathcal U} E_u^{\mathrm{prop}}}_{\text{propulsion energy}}
\;-\;
\underbrace{\mu_t^{\mathrm{col}}\!\!\sum_{u<u'}
\iverson{\|q_u-q_{u'}\|<d_{\min}}}_{\text{safety penalty (collisions)}}
\nonumber\\
&-\;\underbrace{\lambda_{\text{ov}}\!
\sum_{u<u'}\iverson{\|q_u-q_{u'}\|<2R}\,
\exp\!\Big(-\tfrac{\|q_u-q_{u'}\|^2}{2\sigma^2}\Big)}_{\text{overlap penalty}}
\tag{O$^\prime$}
\end{align}

\noindent\textbf{Subject to:}

\begin{align}
B_u(t{+}\Delta t)= & B_u(t)-E_u^{\mathrm{prop}}(t),
\nonumber\\ & 0\le B_u(t)\le B_u^{\max}, \forall u\in\mathcal U,
\tag{C1}\label{con:batt}
\end{align}
\begin{align}
&\sum_{u\in\mathcal U} a_{n,u}\le 1,\quad a_{n,u}\in\{0,1\},\nonumber \\
&a_{n,u}=0\ \text{ if }\ \|q_u^{xy}-x_n^{xy}\|>R_{\mathrm{srv}}(\phi_t),
\forall n\in\mathcal N,\,u\in\mathcal U,
\tag{C2}\label{con:assoc}
\end{align}
\begin{align}
\sum_{n\in\mathcal N} b_{n,u}\le \frac{B}{F(\phi_t)}, &b_{n,u}\ge 0,
\sum_{n\in\mathcal N} p_{n,u}\le P_{\max}, \nonumber\\ &p_{n,u}\ge 0,
\forall u\in\mathcal U,
\tag{C3}\label{con:budgets}
\end{align}
\begin{align}
&\mathrm{SINR}_{n,u}=\frac{p_{n,u}\,g_{n,u}(q_u,x_n)}{N_0\,b_{n,u}+I_{n,u}},\nonumber\\
&R_{n,u}=\alpha\, b_{n,u}\log_2\!\bigl(1+\mathrm{SINR}_{n,u}\bigr),
\forall n\in\mathcal N,\,u\in\mathcal U,
\tag{C4}\label{con:sinr_rate}
\end{align}
\begin{equation}
\sum_{u\in\mathcal U} a_{n,u}\,R_{n,u}
\;\ge\; r_{D_n(t)}\ \mathbb{I}\!\Big(\sum_{u\in\mathcal U} a_{n,u}>0\Big),
 \forall n\in\mathcal N,
\tag{C5}\label{con:qos}
\end{equation}
\begin{equation}
\|q_u-q_{u'}\|\ge d_{\min},\qquad \forall u\neq u'\in\mathcal U.
\tag{C6}\label{con:sep}
\end{equation}

$\mathcal{U}$ and $\mathcal{N}$ are the UAV and user index sets; $q_u{=}(x_u,y_u,h_u)$ and $v_u$ denote UAV position and horizontal velocity; $a_{n,u}\!\in\!\{0,1\}$ indicates association; $b_{n,u}$ and $p_{n,u}$ are per-link bandwidth and power, bounded by \eqref{con:budgets}. The achievable rate is $R_{n,u}=\alpha\, b_{n,u}\log_2(1+\mathrm{SINR}_{n,u})$ with $\alpha\!\in\!(0,1]$ capturing implementation loss and $\mathrm{SINR}_{n,u}$ defined in \eqref{con:sinr_rate}, where $g_{n,u}(q_u,x_n)$ follows TR~38.901 large and small scale fading and $I_{n,u}$ aggregates co-channel and adjacent leakage interference. The demand class $d_n(t)\!\in\!\{L,M,H\}$ maps to a target rate $r_{d_n(t)}$ and weight $w_{d_n(t)}$; the bracket $\iverson{R_{n,u}\ge r_{d_n(t)}}$ equals $1$ if user $n$ served by UAV $u$ meets its threshold and $0$ otherwise. The energy term $E_u^{\mathrm{prop}}$ is the per-step propulsion energy of UAV $u$ with battery dynamics given in \eqref{con:batt}; communication energy is neglected relative to propulsion. The safety penalty uses a minimum separation indicator, and the overlap penalty is a truncated Gaussian with range cutoff $2R$ and scale $\sigma$ to discourage dense UAV packing while allowing flexibility inside the service radius.

Having specified \eqref{obj:system} as the task level target, the remainder of this paper addresses how to keep the policy plastic enough to track it under regime switches. \paragraph{Policy action space versus environment-side heuristics} The formulation in \eqref{obj:system} involves joint decisions over UAV poses $q_u$, user association $a_{n,u}$, bandwidth $b_{n,u}$, and power $p_{n,u}$. In our MARL instantiation, the learned policy outputs only the \emph{2D displacement} of each UAV per time slot (i.e., $a_t^i \in \mathbb{R}^2$, the horizontal movement target). All remaining resource allocation decisions are handled by deterministic, environment-side heuristics that execute within each simulator step: (i)~\emph{user association}: each user is assigned to the nearest UAV within the phase-dependent service radius $R_{\mathrm{srv}}(\phi_t)$; (ii)~\emph{user admission}: each UAV greedily admits users in decreasing demand-class priority ($H > M > L$, ties broken by proximity) up to the per-UAV capacity limit; (iii)~\emph{bandwidth allocation}: the per-color bandwidth $B/F$ is split equally among the $K_u(t)$ admitted users; (iv)~\emph{power allocation}: transmit power per user is set via the feasibility-driven rule in Eq.~\eqref{eq:power_req} to meet the demand-class rate target. This decomposition follows a common pattern in UAV-MARL research where trajectory control is learned while lower-layer radio resource management uses model-based rules, keeping the action space low-dimensional and the learning problem tractable. The policy's influence on QoS, energy, and collisions is therefore \emph{indirect}: by choosing where to fly, each UAV determines which users fall within its coverage, how link budgets are distributed, and how close it is to other UAVs.

Concretely, we design a plasticity--enhanced, sparsely gated MoE controller whose router reassigns traffic across specialized experts when the phase weights $\mu_t$ and demand combinations change, and whose experts receive light stochastic perturbations around switches to avoid common problems such as primacy bias and dormant neurons. In the next sections we instantiate this design, PE--MAMoE, analyze its stability and plasticity trade offs, and show empirically that it maintains effective rank and adaptability when the objective function undergoes sudden changes~\cite{shazeer2017outrageously,nikishin2022primacy,sokar2023dormant,dohare2024nature}.

\section{Proposed Method: PE\text{-}MAMoE Framework}
\label{sec:method}

We propose plasticity enhanced multi-agent mixture of experts, that is PE\text{-}MAMoE, a centralized training with decentralized execution (CTDE) style MARL architecture that pairs conditional computation with stochasticity based plasticity injection to keep policies adaptable when phase weights $\mu_t$ and demand combinations change abruptly. Specifically, decentralized actors consist of a sparsely gated MoE router that selects a small set of specialized experts per state, providing reconfigurable capacity without linear compute growth; a centralized critic stabilizes training. Actors are optimized with proximal policy optimization (PPO) and generalized advantage estimation (GAE) under the MAPPO recipe for cooperative games, while light, scheduled perturbations around regime switches re-plasticize the experts to avoid primacy bias and dormant units. 

\paragraph{Why UAV--ECN non-stationarity demands expert-level plasticity injection}
While stochastic perturbation is a general remedy for plasticity loss in deep RL, several properties of the phase-driven, multi-agent UAV--ECN problem make expert-level noise injection especially critical. First, UAV--ECN phase switches are \emph{abrupt and compound}: a single switch simultaneously changes user mobility parameters, demand-class weights, and QoS thresholds (Section~III-D), so the optimal joint policy shifts across multiple coupled dimensions at once; a monolithic network must overwrite its entire representation, whereas expert-level perturbation selectively loosens only the activated specialists while the router retains the regime-selection map. Second, under CTDE with parameter sharing, all UAVs execute the same actor; once an early coordination convention hardens into a few dominant experts, the team loses the ability to redistribute coverage after a mobility regime change. Expert-only noise breaks this convention lock-in at the specialist level without destabilizing the router's coarse regime assignment, a separation that has no analogue in single-agent plasticity methods. Third, emergency communication imposes hard safety constraints (collision avoidance, energy budget) alongside soft QoS objectives; global weight perturbation would jeopardize the safety-critical components learned in earlier phases, whereas confining noise to the selected experts and briefly freezing the router limits the disturbance radius and preserves the collision-avoidance policy encoded in unselected experts. Fourth, on-policy MAPPO discards data after each update, so the agent cannot revisit old-regime samples to counteract representation drift; the controlled noise window right after a switch acts as a lightweight substitute for experience replay, re-diversifying features precisely when new data is most scarce. Together, these factors make the combination of sparse MoE routing with switch-aware, expert-confined perturbation a structurally motivated design for UAV--ECN coordination rather than a generic RL enhancement.

\subsection{MoE Architecture Paradigm}
\label{subsec:moe_arch}

\paragraph{Overview}
We instantiate the decentralized actor in PE\text{-}MAMoE as a sparsely gated MoE policy: a light router selects a small subset of expert heads per agent step, and only those experts are executed; the centralized critic remains dense. This conditional computation design increases representational capacity without proportional compute, and uses a load balancing auxiliary loss to prevent expert collapse~\cite{shazeer2017outrageously,fedus2022switch}. Figure~\ref{fig:pe-mamoe-arch} sketches the information flow under CTDE: each UAV uses the shared actor with local observation~$o_t^i$, while the critic consumes the global state. Sparse routing follows top-$k$ token choice gating~\cite{shazeer2017outrageously} with switch style load balancing~\cite{fedus2022switch}.

\paragraph{Core innovations}
The PE-MAMoE framework introduces three tightly coupled innovations beyond standard sparse MoE, each targeting a distinct failure mode under phase-driven non-stationarity (see Fig.~\ref{fig:pe-mamoe-arch}).
\emph{(i)~Expert-only stochastic perturbation.} After each phase switch, controlled Gaussian noise is injected into expert parameters for a short, fixed window (Section~\ref{subsec:training_plasticity}). This re-diversifies expert features and counteracts primacy bias and dormant neurons, while leaving the router undisturbed so that the coarse regime-selection map is preserved.
\emph{(ii)~Non-parametric Phase Controller.} A set of non-gradient schedules fires at every switch: the router temperature $\tau$ is raised to encourage exploration across experts, the action log-standard-deviation is reset to re-open behavioral exploration, the entropy coefficient is reset and re-annealed, the learning rate is reduced and warmed up, and Adam moments are cleared. These schedules add zero learnable parameters yet orchestrate the transition from an ``explore new regime'' mode to a ``consolidate'' mode within each phase (Algorithm~\ref{alg:pemamoe}).
\emph{(iii)~KL-anchored stabilization.} A frozen copy of the pre-switch policy serves as a reference for a decaying KL penalty term in the total loss, preventing the policy from drifting too far too fast while the noise injection and schedules take effect. Together, these three mechanisms form a layered system that separates \emph{representational} plasticity (noise injection), \emph{behavioral} exploration (logstd reset, entropy, temperature), and \emph{stability} (KL anchor, LR warmup), enabling fast yet safe re-adaptation at every regime change.

\begin{figure}[t]
  \centering
  \includegraphics[width=0.95\linewidth]{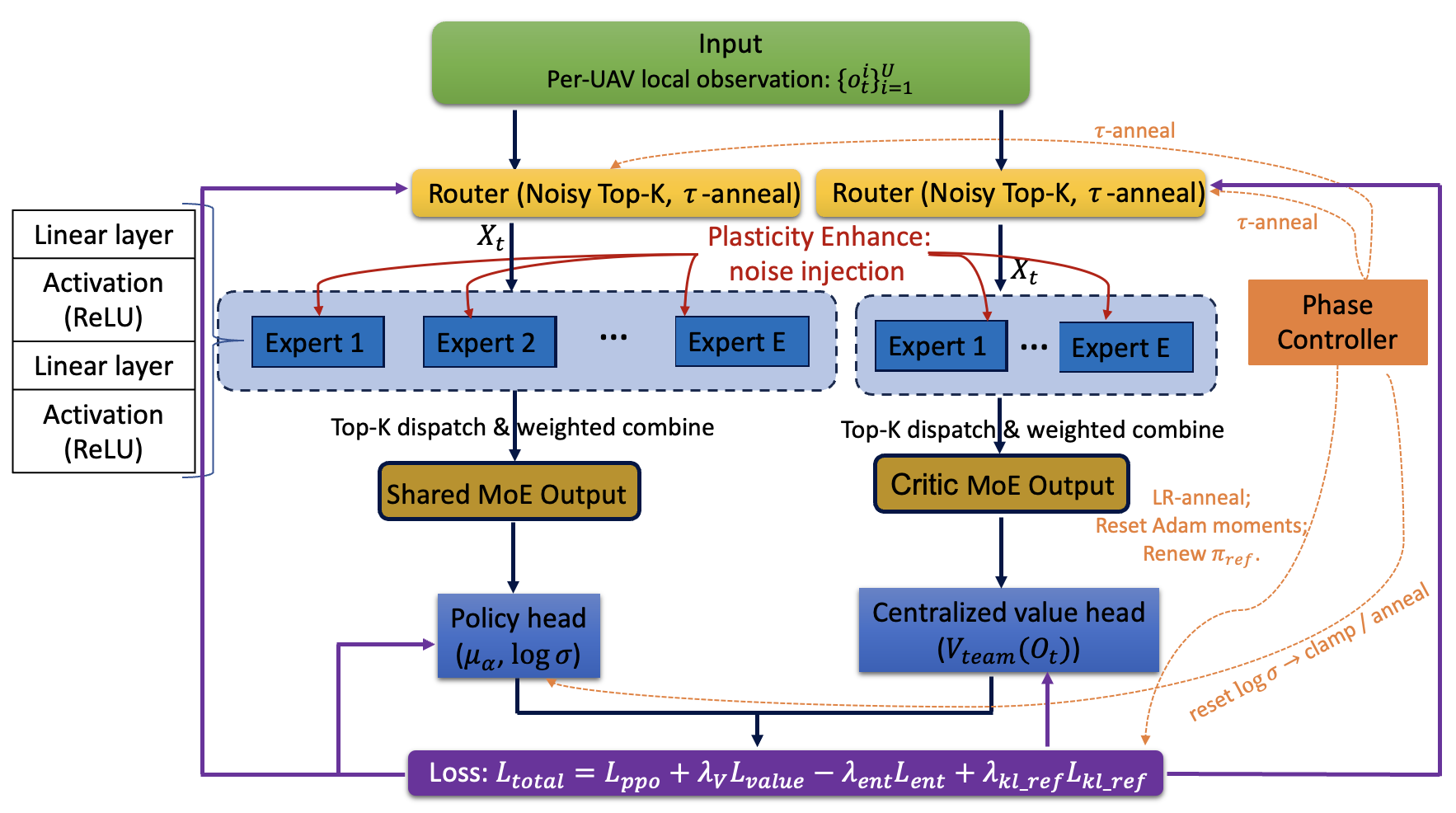}
    \caption{%
    PE-MAMoE architecture.
    For each UAV $i$, the local observation $o_t^i$ is fed into the \emph{Actor router}. The router produces gates $g$, dispatches to the Top-$k$ experts, and combines their outputs to form a shared MoE trunk, which parameterizes the policy head $(\mu_a,\log\sigma)$.
    A \emph{Phase Controller} performs non-gradient scheduling at phase switch:
    router $\tau$-anneal, per group learning rate scheduling and Adam state reset,
    and log-$\sigma$ reset $\rightarrow$ clamp/anneal; it also injects expert noise to
    enhance plasticity. On the critic side, the centralized state $X_t$ is processed by a
    \emph{Critic MoE encoder} to produce the team value $V_{\text{team}}(X_t)$.
    Training follows CTDE with the total loss
    $L_{\text{total}} = L_{\text{ppo}} + \lambda_v L_{\text{value}}
    - \lambda_{\text{ent}} L_{\text{ent}} + \lambda_{\text{kl\_ref}} L_{\text{kl\_ref}}$.
    Gradients propagate to routers and experts (conditional computation: only the selected
    experts execute).}
  \label{fig:pe-mamoe-arch}
\end{figure}

\paragraph{Router and Top-$K$ gating}
For agent $i$ at time $t$, the router input is $x_t^i = \phi(o_t^i)$, where $\phi(\cdot)$ is an identity mapping (no separate embedding) and $o_t^i$ is the local observation vector containing UAV poses, energy levels, per-user rates, per-user demand classes (normalized), and user positions. The phase preference $\mu_t$ is \emph{not} provided to the agent as a direct input; instead, the agent infers the current regime indirectly through the demand-class distribution and user spatial patterns that change at phase boundaries. The router produces logits and temperature--scaled probabilities
\begin{equation}
z_t^i=W_r x_t^i + b_r,\qquad
g_t^i=\mathrm{softmax}\!\left(\frac{z_t^i}{\tau}\right)\in\Delta^{E-1},
\end{equation}
with $E$ experts and temperature $\tau>0$. Let $\mathcal{T}_k(x_t^i)$ be indices of the $k$ largest entries of $g_t^i$ \cite{shazeer2017outrageously}. Define sparse mixture weights
\begin{equation}
w_{t,j}^i \;=\;
\begin{cases}
\dfrac{g_{t,j}^i}{\sum\limits_{m\in \mathcal{T}_k(x_t^i)} g_{t,m}^i}, & j\in \mathcal{T}_k(x_t^i),\\[6pt]
0, & \text{otherwise.}
\end{cases}
\end{equation}

\paragraph{Expert heads and fused Gaussian policy}
Each expert $f_j$ outputs an action head conditioned on $x_t^i$. For the 2D motion action with bounded displacement $a_t^i\in\mathbb{R}^2$, we use a fused Gaussian actor where expert means are linearly combined and a shared log–standard is maintained:
\begin{align}
\mu^{(j)}_t &= W^{(j)}_\mu\,x_t^i + b^{(j)}_\mu,\quad j=1,\dots,E,\\
\bar{\mu}_t^i &= \sum_{j=1}^E w_{t,j}^i\,\mu^{(j)}_t, \qquad
\log \sigma. 
\end{align}
The stochastic policy is then
\begin{equation}
\pi_\theta(a_t^i \mid s_t^i,\mu_t)=\mathcal{N}\!\left(a_t^i \ \big|\ \bar{\mu}_t^i,\ \mathrm{diag}(\sigma^2)\right),
\end{equation}
so the forward pass executes only the $k$ selected experts and the small router. 

\paragraph{Actor objective with MoE terms}
Under MAPPO, the actor minimizes
\begin{equation}
\mathcal{L}_{\mathrm{actor}}
=
\mathbb{E}\!\left[
\mathcal{L}_{\mathrm{PPO}}(\theta)
\right]
\;-\;
\lambda_{\mathrm{ent}}\,\mathbb{E}\!\left[\mathcal{H}\!\big(\pi_\theta(\cdot\mid s_t^i,\mu_t)\big)\right], \label{loss-actor}
\end{equation}
with $\lambda_{\mathrm{ent}}\!\ge\!0$. The critic and advantage estimator follow standard PPO and GAE~\cite{schulman2017ppo,schulman2016gae} under CTDE~\cite{amato2024ctde}.

\paragraph{Why MoE here}
Top-$K$ token choice routing scales capacity while keeping per-step cost near that of a small dense actor; the load balancing term keeps experts engaged and prevents idle or dominant experts, which is important for handling abrupt objective shifts~\cite{shazeer2017outrageously,fedus2022switch}. We will next describe how plasticity injection is integrated into the experts to maintain adaptability during system switching.

\subsection{Training Experts with Plasticity Injection}
\label{subsec:training_plasticity}

\paragraph{Design Motivation}
Deep RL policies gradually lose the ability to fit new targets after long training, a phenomenon known as loss of plasticity, manifested by dormant neurons and rank collapse; this is aggravated under non-stationary objectives and on policy updates~\cite{nikishin2022primacy,sokar2023dormant,lyle2024plasticity,dohare2024nature}.%
\footnote{Empirical evidence spans Atari/MuJoCo and classification; maintaining plasticity typically requires continual diversity injection or resets.}
To keep MoE experts adaptable when phase weights $\mu_t$ and demand combinations change, PE\text{-}MAMoE couples light stochastic perturbations with targeted reactivation at regime switches, while retaining PPO and GAE optimization. As discussed above, expert-only injection is chosen over global perturbation because UAV--ECN switches are compound (mobility, demand, and QoS shift simultaneously) and safety critical (collision avoidance must persist); confining noise to the activated specialists re-plasticizes the capacity needed for the new regime while the frozen router and unselected experts protect previously learned safe behaviors.

\paragraph{Where the noise enters}
Let $\theta_j$ denote the parameters of expert $j\!\in\!\{1,\dots,E\}$. For a minibatch $\mathcal{B}$, the PPO actor loss with MoE terms (see Eq.~\eqref{loss-actor}) is optimized with noise-perturbed parameter updates on the selected experts:
\begin{align}
\theta_j &\leftarrow \theta_j - \eta_t \nabla_{\theta_j}\mathcal{L}_{\mathrm{actor}}
\;+\; \eta_t\,\gamma_t^{(e)}\,\varepsilon_j,
\varepsilon_j\sim\mathcal{N}(\mathbf{0},\mathbf{I}),\ \forall j\in\mathcal{T}_k. \label{eq:expert-noise}
\end{align}
where $\eta_t$ is the learning rate, and $\gamma_t^{(e)}$ is switch-aware noise scales.

\paragraph{Switch aware noise window}
Let $t_s$ be a phase switch time. We use a fixed amplitude pulse for $N$ epochs after each switch and zero otherwise:

\begin{align}
\gamma_t^{(e)}=
\begin{cases}
\gamma_0^{(e)}, & t\in[t_s,\,t_s{+}N),\\
0, & \text{otherwise}.
\end{cases}
\end{align}

In the main experiments we set $\gamma_0{=}0.005$. In the ablation (Sec.~\ref{subsec:ablation_study}) we set $\gamma_0$ with different values and apply it only within the first $N$ epochs after a switch.

\paragraph{Monitoring dormant units}
To quantify plasticity degradation, we only monitor the dormant neuron fraction (DNF) per layer during training and across phase switches, without applying any reinitialization. For a layer of width $d$, the DNF over a minibatch $\mathcal{B}$ is
\begin{equation}
\mathrm{DNF} \;=\; \frac{1}{d}\sum_{u=1}^d 
\mathbb{I}\!\left\{
\frac{1}{|\mathcal{B}|}\sum_{x\in\mathcal{B}} |a_u(x)| \;<\; \delta
\right\},
\end{equation}
where $a_u(\cdot)$ is the output before nonlinear activation and $\delta$ is a small threshold. We report DNF together with effective rank statistics to diagnose plasticity loss and representation collapse under non-stationary phases.

\paragraph{Router temperature control}
To avoid early router collapse while keeping training simple, we do not add a router entropy regularizer to the loss. Instead, a non-gradient Phase Controller adjusts the router temperature immediately after a phase switch and then cools it as the phase stabilizes:
\begin{equation}
g_t=\mathrm{softmax}\!\Big(\frac{z_t}{\tau_t}\Big),\qquad
\tau_t=\tau_{\min}+\frac{\tau_{\max}-\tau_{\min}}{1+\kappa\,e^{-\beta\,\Delta(t)}},
\label{eq:router-temp-only}
\end{equation}
where $\Delta(t)$ is time since the last switch. This temperature schedule increases routing diversity
right after switches without introducing an extra loss term.

\paragraph{Stability}
Training follows CTDE with PPO and GAE. The total loss used in our experiments is
\begin{align}
\min_{\theta,\psi}\ \mathcal{L}_{\mathrm{total}}
&=\underbrace{\mathbb{E}_{\mathcal{B}}\!\big[\mathcal{L}_{\mathrm{PPO}}(\theta)\big]}_{\text{actor}}
\;+\;\underbrace{\lambda_V\,\mathbb{E}_{\mathcal{B}}\!\big[(V_\psi-\hat V)^2\big]}_{\text{value}} \\
& \;-\;\lambda_{\mathrm{ent}}\,\mathbb{E}_{\mathcal{B}}\!\big[\mathcal{H}(\pi_\theta)\big]
\;+\;\lambda_{\mathrm{kl\_ref}}\,\mathbb{E}_{\mathcal{B}}\!\big[\mathrm{KL}(\pi_\theta\,\|\,\pi_{\text{ref}})\big],\nonumber
\label{eq:total-loss-final}
\end{align}
where $\pi_{\text{ref}}$ is a reference policy used for stabilization across switches, set to the pre-switch actor snapshot. Parameters are updated with the
expert noise rule in Eq.~\eqref{eq:expert-noise}. The Phase Controller also applies non-gradient schedules at switches:
per-group learning rate warmup and decay, Adam state reset, and actor log-$\sigma$ reset followed by
clamp and anneal (cf.\ Fig.~\ref{fig:pe-mamoe-arch}).

Algorithm~\ref{alg:pemamoe} consolidates the full PE-MAMoE training procedure, making the execution order and conditional logic explicit.

\begin{algorithm}[!t]
\caption{PE-MAMoE Training Loop}
\label{alg:pemamoe}
\begin{algorithmic}[1]
\footnotesize
\STATE \textbf{Input:} phase schedule $\{\phi_0,\phi_1,\dots\}$, phase length $L$, experts $\{f_j\}_{j=1}^E$, router $W_r$, noise budget $N$, noise scale $\gamma_0$, temperature bounds $\tau_{\min},\tau_{\max}$, freeze window $K_{\text{freeze}}$, warm window $K_{\text{warm}}$
\STATE Initialize shared actor $\pi_\theta$ (MoE), centralized critic $V_\psi$, optimizer (Adam)
\STATE $\text{noise\_left} \leftarrow 0$; $\text{freeze\_left} \leftarrow 0$; $\pi_{\text{ref}} \leftarrow \text{None}$
\FOR{iteration $= 1, 2, \dots$}
    \STATE \textbf{// --- Check phase switch ---}
    \IF{iteration $\bmod$ $L = 0$}
        \STATE $\phi \leftarrow$ next phase; update env demands and mobility
        \STATE \textbf{// Phase Controller (non-gradient):}
        \STATE $\pi_{\text{ref}} \leftarrow$ frozen copy of $\pi_\theta$ \hfill $\triangleright$ KL anchor
        \STATE Reset Adam optimizer states (moments $\leftarrow 0$)
        \STATE Inject expert noise: $\theta_j \leftarrow \theta_j + \gamma_0 \varepsilon_j,\ \varepsilon_j\!\sim\!\mathcal{N}(\mathbf{0},\mathbf{I}),\ \forall j$
        \STATE $\text{noise\_left} \leftarrow N$; \ $\text{freeze\_left} \leftarrow K_{\text{freeze}}$
        \STATE Reset $\log\sigma \leftarrow \sigma_{\text{init}}$ \hfill $\triangleright$ re-open exploration
        \STATE Reset entropy coef.\ to initial value; begin anneal
        \STATE Set LR $\leftarrow \alpha_{\text{phase}} \cdot \alpha_0$; begin warmup over $K_{\text{warm}}$ iters
        \STATE Set conservative PPO clip and target-KL for $K_{\text{warm}}$ iters
    \ENDIF
    \STATE \textbf{// --- Scheduling within phase ---}
    \IF{freeze\_left $> 0$}
        \STATE Set router LR $\leftarrow 0$ (freeze router weights)
        \STATE freeze\_left $\leftarrow$ freeze\_left $- 1$
    \ELSE
        \STATE Restore router LR; warm up toward $\alpha_0$
    \ENDIF
    \STATE Anneal $\tau_t$ via Eq.~\eqref{eq:router-temp-only} based on time since last switch
    \STATE Anneal entropy coefficient linearly within current phase
    \STATE \textbf{// --- Rollout ---}
    \STATE Collect trajectories $\{(o_t^i, a_t^i, r_t, o_{t+1}^i)\}$ using $\pi_\theta$ with top-$k$ routing
    \STATE Compute advantages $\hat{A}_t$ via GAE with centralized critic $V_\psi$
    \STATE \textbf{// --- PPO update epochs ---}
    \FOR{epoch $= 1, \dots, E_{\text{update}}$}
        \FOR{each minibatch $\mathcal{B}$}
            \STATE Compute $\mathcal{L}_{\text{PPO}}$, $\mathcal{L}_{\text{value}}$, $\mathcal{H}(\pi_\theta)$
            \IF{$\pi_{\text{ref}} \neq \text{None}$ \AND $\lambda_{\text{kl\_ref}} > 0$}
                \STATE $\mathcal{L}_{\text{total}} \leftarrow \mathcal{L}_{\text{PPO}} + \lambda_V \mathcal{L}_{\text{value}} - \lambda_{\text{ent}} \mathcal{H} + \lambda_{\text{kl\_ref}} \mathrm{KL}(\pi_\theta \| \pi_{\text{ref}})$
            \ELSE
                \STATE $\mathcal{L}_{\text{total}} \leftarrow \mathcal{L}_{\text{PPO}} + \lambda_V \mathcal{L}_{\text{value}} - \lambda_{\text{ent}} \mathcal{H}$
            \ENDIF
            \STATE Update $\theta, \psi$ via Adam on $\mathcal{L}_{\text{total}}$
        \ENDFOR
        \IF{noise\_left $> 0$}
            \STATE Inject expert noise: $\theta_j \leftarrow \theta_j + \gamma_0 \varepsilon_j,\ \forall j \in \mathcal{T}_k$
            \STATE noise\_left $\leftarrow$ noise\_left $- 1$
        \ENDIF
    \ENDFOR
    \STATE Decay $\lambda_{\text{kl\_ref}}$; if below threshold, release $\pi_{\text{ref}}$
\ENDFOR
\end{algorithmic}
\end{algorithm}

\section{Theoretical Analysis}
\label{sec:theory}

We analyze PE\text{-}MAMoE under phase driven non-stationarity and prove a dynamic regret bound showing that the tracking error is governed by two factors: the variation budget of the changing environment and the cumulative noise energy injected to restore plasticity.

\paragraph{Problem setting}
Let $\{\mathcal M_t\}_{t=1}^T$ be a sequence of cooperative Markov decision processes (MDPs), induced by phase switches. At step $t$, let $\pi_t^*$ be the optimal policy for $\mathcal M_t$, and let $J_t(\pi)$ denote its episodic performance criterion. The dynamic regret of a learning policy sequence $\{\pi_t\}$ is
\begin{equation}
\mathrm{DynRegret}_T \;\triangleq\; \sum_{t=1}^T \Big( J_t(\pi_t^*) - J_t(\pi_t) \Big).
\end{equation}
We quantify non-stationarity by a variation budget $V_T$ (a.k.a.\ path length), instantiated either on rewards and transitions $(B_r,B_p)$ or on the path length of the optimal values and policies, as standard in non-stationary RL~\cite{cheung2020nonstationary,fei2020dynamic}.

\subsection{Assumptions}
\label{subsec:assumptions}
\begin{enumerate}
\item[(A1)] \textbf{Bounded variation.} The environment drift is controlled: $V_T<\infty$, e.g., bounded total variation in rewards and transitions~\cite{cheung2020nonstationary}, or bounded path length in $J_t(\pi_t^*)$.
\item[(A2)] \textbf{PPO and trust region policy optimization (TRPO)-style stability.} Per-step policy updates obey a TRPO and PPO performance difference bound with a trust region surrogate~\cite{wang2019trpo_guided,ppo_fisher_rao}.
\item[(A3)] \textbf{Centralized critic error bound.} The CTDE critic yields bounded estimation error; GAE reduces variance, producing a summable martingale remainder.
\item[(A4)] \textbf{MoE Lipschitz stability.}
Assume router logits and expert policies are Lipschitz in parameters and the
Top-$k$ selection uses a bounded temperature $\tau_t$, yielding a piecewise smooth
mixed policy $\pi_\theta(a\mid s,\phi)=\sum_{k=1}^K G_k(s,\phi)\,\pi_{\theta_k}(a\mid s)$ with
\begin{align}
\|\pi_{\theta'}(\cdot\mid s,\phi)-\pi_\theta(\cdot\mid s,\phi)\|_1
\;\le\; & L_G\|G_{\theta'}-G_\theta\| \; \\
& + \; \sum_{k=1}^K L_k\|\theta'_k-\theta_k\|. \nonumber
\end{align}
Thus MoE does not violate TRPO and PPO stability; it only affects constants in the
performance difference bound~\cite{chen2022moe_theory,baykal2022sparse,zhou2022expertchoice}.
\item[(A5)] \textbf{Expert only noise.}
At training step $t$, we inject zero mean Gaussian perturbations only into the
selected experts.
The expert noise scale $\gamma_t^{(e)}$ is a constant and a fixed amplitude pulse restricted to the first $N$ epochs after each phase switch. Hence the cumulative ``noise energy” $\sum_{t=1}^T (\gamma_t^{(e)})^2$ is finite.
\end{enumerate}

\subsection{Auxiliary lemmas}

\begin{lemma}[TRPO or PPO performance difference]
\label{lem:ppo_stability}
Let $\pi'$ be the PPO or TRPO update of $\pi$ under a trust region surrogate. Then for the MDP at time $t$,
\begin{equation}
J_t(\pi') - J_t(\pi) \;\ge\; \mathbb{E}_{s\sim d_t^\pi,a\sim\pi'}\!\big[ A_t^\pi(s,a) \big] \!-\! C_t\,\mathrm{Div}(\pi'\Vert \pi) \!-\! \varepsilon_t^{\mathrm{est}},
\end{equation}
where $A_t^\pi$ is the advantage under $\pi$ in $\mathcal M_t$, $\mathrm{Div}$ is a divergence, $C_t$ is a constant depending on mixing and reward bounds, and $\varepsilon_t^{\mathrm{est}}$ collects critic or GAE approximation errors~\cite{wang2019trpo_guided,ppo_fisher_rao}.
\end{lemma}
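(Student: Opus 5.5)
The plan is to derive the inequality from the exact performance difference identity, following the standard Kakade--Langford / TRPO argument, and then push the remaining imprecision into the estimation term. Fix the MDP $\mathcal M_t$ with discount $\gamma$ and bounded rewards $|r|\le R_{\max}$.

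\textbf{Step 1: exact identity.} I would start from
\begin{equation*}
J_t(\pi')-J_t(\pi)=\tfrac{1}{1-\gamma}\,\mathbb{E}_{s\sim d_t^{\pi'},\,a\sim\pi'}\big[A_t^\pi(s,a)\big].
\end{equation*}
This follows by telescoping $V^\pi$ along trajectories generated by $\pi'$. I absorb the normalisation $1/(1-\gamma)$ into the definition of the expectation, or equivalently into $C_t$. The sole difficulty is that the states here are drawn from $d_t^{\pi'}$, not $d_t^{\pi}$.

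\textbf{Step 2: change of state distribution.} Write the expectation under $d_t^{\pi'}$ as the same expectation under $d_t^{\pi}$ plus a correction. The correction is at most
\begin{equation*}
\|d_t^{\pi'}-d_t^{\pi}\|_1 \cdot \max_s \big|\mathbb{E}_{a\sim\pi'}A_t^\pi(s,a)\big|.
\end{equation*}
\begin{itemize}
\item For the second factor, note that $\mathbb{E}_{a\sim\pi}A^\pi=0$. Hence the inner quantity is bounded by $\|A^\pi\|_\infty \max_s \|\pi'(\cdot|s)-\pi(\cdot|s)\|_1$, with $\|A^\pi\|_\infty\le 2R_{\max}/(1-\gamma)$.
\item For the first factor, use the standard perturbation bound for discounted occupancy measures: $\|d^{\pi'}-d^{\pi}\|_1\le \frac{\gamma}{1-\gamma}\max_s\|\pi'-\pi\|_1$.
\end{itemize}
Combining the two gives a quadratic term $\frac{2\gamma R_{\max}}{(1-\gamma)^2}\,\big(\max_s\|\pi'-\pi\|_1\big)^2$.

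\textbf{Step 3: convert to a divergence.} By Pinsker's inequality, $\|\pi'-\pi\|_1^2\le 2\,\mathrm{KL}$. This yields the form $C_t\,\mathrm{Div}(\pi'\Vert\pi)$ with $\mathrm{Div}=\max_s\mathrm{KL}$. Here $C_t$ depends on the reward bound and on the effective horizon, which plays the role of mixing. To have Assumption (A4) enter, I would bound $\|\pi'-\pi\|_1$ via the MoE Lipschitz constants, which only rescales $C_t$.

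This is the main obstacle. For PPO with clipping rather than an explicit KL constraint, $\max_s\mathrm{KL}$ is not directly controlled. I would handle this by invoking Assumption (A2): the trust-region surrogate, together with the target-KL safeguard used in the algorithm, keeps the divergence bounded. I would then state the lemma for whatever divergence that trust region controls. An averaged-KL version, in the style of Achiam et al., is an option if only the mean KL is controlled.

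\textbf{Step 4: estimation error.} In practice $A_t^\pi$ is replaced by the GAE estimate $\hat A$ from the centralized critic. The bias $\mathbb{E}[A-\hat A]$ and the sampling error are both bounded by Assumption (A3). I collect them into $\varepsilon_t^{\mathrm{est}}$. This gives a one-sided lower bound, since the true improvement is at least the surrogate minus these errors.
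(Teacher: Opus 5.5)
The paper gives no proof of this lemma. It is stated with citations to the TRPO/PPO literature, and the same content is posited as assumption (A2). Your proposal instead derives it from first principles, using four ingredients:
\begin{itemize}
\item the Kakade--Langford identity;
\item the centering fact $\mathbb{E}_{a\sim\pi}A_t^\pi=0$;
\item the occupancy bound $\|d_t^{\pi'}-d_t^{\pi}\|_1\le\frac{\gamma}{1-\gamma}\max_s\|\pi'(\cdot|s)-\pi(\cdot|s)\|_1$;
\item Pinsker's inequality.
\end{itemize}
This is the Achiam et al.\ route rather than TRPO's coupling argument. Once the $1/(1-\gamma)$ factor is restored, it reproduces TRPO's penalty $\frac{\gamma\max_{s,a}|A_t^\pi|}{(1-\gamma)^2}\bigl(\max_s\|\pi'(\cdot|s)-\pi(\cdot|s)\|_1\bigr)^2$.

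Your route buys an explicit $C_t$ and $\mathrm{Div}=\max_s\mathrm{KL}$. It also shows that the inequality holds for \emph{every} $\pi'$, not only a PPO/TRPO iterate. So the clipping issue you call the main obstacle does not affect validity: the trust region or target-KL stop only keeps the penalty small. You should not invoke (A2) at all, since (A2) merely restates this lemma and using it would make the argument circular. The paper's citation route is vaguer, but it leaves room for whatever divergence a PPO-specific analysis controls.

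Four corrections to your argument:
\begin{itemize}
\item \textbf{Normalisation.} The $1/(1-\gamma)$ in front of the surrogate can only be handled by reading $d_t^\pi$ as the unnormalized discounted visitation measure, as TRPO does. It cannot be moved into $C_t$. With normalized $d_t^\pi$, the discrepancy $\frac{\gamma}{1-\gamma}\mathbb{E}_{s\sim d_t^\pi,a\sim\pi'}[A_t^\pi(s,a)]$ is first order in $\|\pi'-\pi\|_1$ and can be negative, so no second-order term $C_t\,\mathrm{KL}$ absorbs it.
\item \textbf{Averaged-KL fallback.} This does not preserve the linear-in-$\mathrm{Div}$ form. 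Your Step 2 needs $\max_s$ in the factor bounding $\mathbb{E}_{a\sim\pi'}A_t^\pi(s,a)$. Controlling only $\mathbb{E}_{s\sim d_t^\pi}\mathrm{KL}$ gives a penalty proportional to $\sqrt{\mathbb{E}_{s\sim d_t^\pi}\mathrm{KL}}$.
\item \textbf{Use of (A4).} Routing $\|\pi'-\pi\|_1$ through (A4) is unnecessary. It would also replace $\mathrm{KL}$ by a parameter-space quantity, rather than merely rescaling $C_t$.
\item \textbf{Estimation error and the multi-agent setting.} The statement uses the true advantage $A_t^\pi$, so Steps 1--3 already give it with $\varepsilon_t^{\mathrm{est}}=0$; Step 4 is needed only for a version with $\hat A$. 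The single-MDP argument applies to the joint product of the UAV actors, provided each UAV's observation is a function of the global state of $\mathcal M_t$.
\end{itemize}
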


\begin{lemma}[Drift decomposition under variation budget]
\label{lem:drift_variation}
Let $\{\pi_t\},\{\pi_t^*\}$ be the learner and step wise optimal policies. Then
\begin{align}
\sum_{t=1}^T \!\big( J_t(\pi_t^*) - J_t(\pi_t) \big)
\;\le\; \underbrace{\sum_{t=1}^T \!\big( J_t(\pi_t^*) - J_{t-1}(\pi_{t-1}^*) \big)}_{\textstyle \mathcal{D}_T(V_T)}+ \\
\underbrace{\sum_{t=1}^T \!\big( J_t(\pi_{t}) - J_t(\pi_{t-1}) \big)^{-}}_{\textstyle \mathcal{O}_T}
\;+\; \sum_{t=1}^T \varepsilon_t^{\mathrm{est}},\nonumber
\end{align}
where $\mathcal{D}_T(V_T)$ is controlled by the environment variation budget $V_T$, and $\mathcal{O}_T$ is the optimization cost bounded via Lemma~\ref{lem:ppo_stability} by trust region terms. Standard non-stationary RL arguments further give a representative bound $\mathcal{D}_T(V_T) \!\in\! \widetilde{\mathcal O}(T^{3/4} V_T^{1/4})$ under common mixing or diameter conditions~\cite{cheung2020nonstationary,fei2020dynamic}.
\end{lemma}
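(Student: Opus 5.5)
The plan is to work with the per-step suboptimality gap $\delta_t \triangleq J_t(\pi_t^*) - J_t(\pi_t)$ and split it by inserting the previous optimal value and the previous learner policy. Concretely, I will write
\begin{align*}
\delta_t \;=\; &\big(J_t(\pi_t^*) - J_{t-1}(\pi_{t-1}^*)\big) + \delta_{t-1} \\
&+ \big(J_{t-1}(\pi_{t-1}) - J_t(\pi_{t-1})\big) + \big(J_t(\pi_{t-1}) - J_t(\pi_t)\big),
\end{align*}
with the convention that $J_0$, $\pi_0$ and $\pi_0^*$ are the initial objective, policy and optimum. The four pieces are, in order, an \emph{optimal-value drift} term, the inherited gap, an \emph{evaluation-drift} term that measures the same policy in two consecutive MDPs, and a \emph{one-step optimization} term. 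Summing over $t$ produces the sum that defines $\mathcal{D}_T(V_T)$. I bound the last piece above by its negative part $(J_t(\pi_t)-J_t(\pi_{t-1}))^{-}$, which produces $\mathcal{O}_T$.

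The second step handles the evaluation-drift term. By (A1), with the variation budget instantiated on rewards and transitions $(B_r,B_p)$, the standard simulation-lemma argument gives
\[
|J_t(\pi)-J_{t-1}(\pi)| \le c\,(B_{r,t}+B_{p,t})
\]
uniformly in $\pi$, where $c$ depends on the mixing time or diameter. The total of these terms is therefore $O(V_T)$. I absorb this into $\mathcal{D}_T(V_T)$, since that term is by definition "controlled by $V_T$", and the $\widetilde{\mathcal O}(T^{3/4}V_T^{1/4})$ rate quoted from~\cite{cheung2020nonstationary,fei2020dynamic} already dominates an additive $O(V_T)$ contribution whenever $V_T \le T$.

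The third and main step removes the inherited gap $\delta_{t-1}$. Without this, summing the recursion gives a quadratic-in-$T$ accumulation rather than the stated linear decomposition. Here I invoke Lemma~\ref{lem:ppo_stability}. The PPO/TRPO update from $\pi_{t-1}$ improves $J_t$ by at least $\mathbb{E}[A_t^{\pi_{t-1}}] - C_t\,\mathrm{Div} - \varepsilon_t^{\mathrm{est}}$. I would add a gradient-domination (concentrability) condition, which is standard for trust-region analyses and implicit in (A2), so that the expected advantage of the greedy-in-trust-region update is at least the current gap $J_t(\pi_t^*) - J_t(\pi_{t-1})$, up to the trust-region penalty. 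Under this condition the update in round $t$ "pays off" the inherited gap. What remains is the divergence penalty, which is charged to $\mathcal{O}_T$ after using (A4) to convert parameter changes into policy changes, together with the estimation error $\varepsilon_t^{\mathrm{est}}$. Summing then yields the stated inequality, with the critic errors collected as $\sum_t \varepsilon_t^{\mathrm{est}}$; (A3) guarantees this sum is a summable martingale remainder.

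I expect this third step to be the obstacle. The literal sum does not telescope, and the inherited gap can only be cancelled with some improvement guarantee stronger than monotonicity. If full gradient domination is too strong, the fallback is as follows. I would use Lemma~\ref{lem:ppo_stability} only to show that the gap contracts by a factor $(1-\rho)$ per step, where $\rho>0$ is set by the trust-region radius. I would then unroll the recursion as a geometric series, which loses only a constant $1/\rho$ in front of $\mathcal{D}_T$, $\mathcal{O}_T$ and $\sum_t\varepsilon_t^{\mathrm{est}}$. This would establish the lemma up to constants.
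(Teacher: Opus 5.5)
There is a genuine gap, and it cannot be closed for the statement as written. The paper gives no proof of this lemma: it is asserted with citations and then used in Theorem~\ref{thm:main}, so there is no argument for you to match. The real issue is that $\mathcal{D}_T(V_T)$ telescopes: $\mathcal{D}_T=J_T(\pi_T^*)-J_0(\pi_0^*)$ is a single boundary term, bounded by the range of $J$ and independent of both $T$ and $V_T$. Two things follow. First, absorbing an $O(V_T)$ evaluation-drift term into $\mathcal{D}_T$, as your step~2 does, changes the inequality; it is not harmless. Second, the cited $T^{3/4}V_T^{1/4}$ rates bound the dynamic regret of specific sliding-window or restart algorithms, not this sum. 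With that in view, the inequality is false under (A1)--(A5). Take a stationary MDP ($J_t\equiv J$, $\pi_t^*\equiv\pi^*$), an exact critic ($\varepsilon_t^{\mathrm{est}}=0$), and a learner that either never moves ($\pi_t\equiv\pi_0$ suboptimal) or improves $J$ monotonically by PPO steps without reaching $\pi^*$. Then $\mathcal{D}_T=0$ and $\mathcal{O}_T=0$ under either sign convention for $(\cdot)^-$, so the right-hand side is $0$. The left-hand side $\sum_t\bigl(J(\pi^*)-J(\pi_t)\bigr)$ is positive and grows with $T$. Any derivation must therefore import a hypothesis that makes the learner essentially optimal after every update, and that is exactly what your step~3 does.

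Your full-payoff condition says the update from $\pi_{t-1}$ recovers the whole gap $J_t(\pi_t^*)-J_t(\pi_{t-1})$, up to the trust-region penalty. That is not gradient domination; it is one-step near-optimality. It gives $\delta_t\le C_t\mathrm{Div}(\pi_t\Vert\pi_{t-1})+\varepsilon_t^{\mathrm{est}}$ directly, so your decomposition and $V_T$ play no role. It also contradicts (A2). A per-state trust region of radius $r$ caps $|J_t(\pi_t)-J_t(\pi_{t-1})|$ at a constant times $\sqrt{r}$ (performance-difference lemma plus Pinsker). So right after a phase switch, when the gap jumps by a constant, it cannot be paid off in one step. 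Even granting the condition, the leftover is $\sum_t C_t\mathrm{Div}(\pi_t\Vert\pi_{t-1})$, which is not $\mathcal{O}_T=\sum_t(J_t(\pi_t)-J_t(\pi_{t-1}))^-$: the latter vanishes under monotone improvement, the former does not.

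Your fallback is the right kind of argument, but it proves a different statement. Write $p_t=C_t\mathrm{Div}(\pi_t\Vert\pi_{t-1})$. The contraction $\delta_t\le(1-\rho)\bigl(J_t(\pi_t^*)-J_t(\pi_{t-1})\bigr)+p_t+\varepsilon_t^{\mathrm{est}}$ does not follow from Lemma~\ref{lem:ppo_stability} alone. It needs gradient domination plus a uniform lower bound on the progress of each trust-region step, which is an extra hypothesis. Grant it, and write $P_T=\sum_t\sup_\pi|J_t(\pi)-J_{t-1}(\pi)|$, which also dominates $\sum_t|J_t(\pi_t^*)-J_{t-1}(\pi_{t-1}^*)|$. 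Unrolling then gives
\[
\sum_{t=1}^T\delta_t\;\le\;\frac{1}{\rho}\Bigl(\delta_0+2P_T+\sum_{t=1}^T\bigl(p_t+\varepsilon_t^{\mathrm{est}}\bigr)\Bigr).
\]
Absolute values are unavoidable here because the geometric weights destroy the telescoping. This is a defensible corrected lemma, with a genuine path-length term that your simulation-lemma step controls by $V_T$, and it is what you should state and prove. It is not the displayed inequality ``up to constants''.
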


\begin{lemma}[Noise energy penalty]
\label{lem:noise_energy}
Under \textnormal{(A5)}, noise perturbed updates add a martingale difference
variance term so that
\begin{equation}
    \sum_{t=1}^T \big(J_t(\pi_t)-\mathbb{E}[J_t(\pi_t)\mid\mathcal{F}_{t-1}]\big)
\;=\; \mathcal{O}_{\mathbb{P}}\!\Big(\sqrt{\textstyle\sum_{t=1}^T (\gamma_t^{(e)})^2}\Big).
\end{equation}
Consequently, the cumulative penalty scales as $\mathcal{O}(\gamma\sqrt{T})$ for
constant $\gamma$, and as $\mathcal{O}(\gamma\sqrt{K N})$ when a fixed amplitude
pulse is applied for $N$ epochs after each of $K$ phase switches, independent of
annealing.
\end{lemma}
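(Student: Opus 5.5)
The plan is to treat the noise-induced fluctuation as a martingale sum and control it with a second-moment (Azuma/Freedman-type) argument. Define $X_t \triangleq J_t(\pi_t)-\mathbb{E}[J_t(\pi_t)\mid\mathcal{F}_{t-1}]$, where $\mathcal{F}_{t-1}$ is the $\sigma$-algebra generated by all rollouts, minibatch draws and noise draws up to the end of step $t-1$. By construction $\mathbb{E}[X_t\mid\mathcal{F}_{t-1}]=0$, so $\{X_t\}$ is a martingale difference sequence and $S_T=\sum_{t=1}^T X_t$ is a martingale. Orthogonality of increments gives
\begin{equation*}
\mathbb{E}[S_T^2]=\sum_{t=1}^T \mathbb{E}[X_t^2].
\end{equation*}
Chebyshev's inequality then converts a bound on $\sum_t\mathbb{E}[X_t^2]$ into the $\mathcal{O}_{\mathbb{P}}$ statement. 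The whole task therefore reduces to showing $\mathbb{E}[X_t^2\mid\mathcal{F}_{t-1}]\le c\,(\gamma_t^{(e)})^2$ for a constant $c$ independent of $t$.

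To bound the conditional variance, note that given $\mathcal{F}_{t-1}$, the policy $\pi_t$ is a function of the deterministic gradient step plus the injected perturbation $\eta_t\gamma_t^{(e)}\varepsilon_j$ on the selected experts $j\in\mathcal{T}_k$, as in Eq.~\eqref{eq:expert-noise}. I will first isolate the noise as the only source of conditional randomness. Minibatch sampling randomness can either be absorbed into $\varepsilon_t^{\mathrm{est}}$ via (A3), or the lemma can be read as the increment attributable to noise. The key Lipschitz chain is as follows.
\begin{itemize}
\item Bounded rewards over a finite or discounted horizon make $J_t$ Lipschitz in the policy under the state-wise $\ell_1$ norm: $|J_t(\pi)-J_t(\pi')|\le C_J\sup_s\|\pi(\cdot\mid s)-\pi'(\cdot\mid s)\|_1$, by a simulation-lemma bound with $C_J$ of order $R_{\max}/(1-\gamma)^2$ or $H^2R_{\max}$.
\item (A4) gives $\|\pi_{\theta'}-\pi_\theta\|_1\le\sum_k L_k\|\theta'_k-\theta_k\|$, since the router is unperturbed (and frozen during the window) so the $L_G$ term vanishes.
\end{itemize}
Combining these, $|X_t|\le C_J L_{\max}\,\eta_t\gamma_t^{(e)}\big(\sum_{j\in\mathcal{T}_k}\|\varepsilon_j\|+\mathbb{E}\|\varepsilon_j\|\big)$. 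Squaring and taking the conditional expectation yields $\mathbb{E}[X_t^2\mid\mathcal{F}_{t-1}]\le 4k^2C_J^2L_{\max}^2\eta_{\max}^2 d\,(\gamma_t^{(e)})^2$, where $d$ is the expert parameter dimension. This is the desired $c(\gamma_t^{(e)})^2$.

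Summing over $t$ gives $\mathbb{E}[S_T^2]\le c\sum_t(\gamma_t^{(e)})^2$, so $\mathbb{P}\big(|S_T|>M\sqrt{c\sum_t(\gamma_t^{(e)})^2}\big)\le 1/M^2$, which is the $\mathcal{O}_{\mathbb{P}}$ claim. The two corollaries then follow by evaluating the noise energy directly.
\begin{itemize}
\item For constant $\gamma$, $\sum_t\gamma^2=\gamma^2T$, giving $\mathcal{O}(\gamma\sqrt T)$.
\item Under (A5) the pulse is nonzero only on $N$ epochs after each of the $K$ switches, so $\sum_t(\gamma_t^{(e)})^2=\gamma_0^2KN$, giving $\mathcal{O}(\gamma_0\sqrt{KN})$ whatever happens outside the windows, including any annealing.
\end{itemize}
If a high-probability version is preferred, the Gaussian tails of $\|\varepsilon_j\|$ make $X_t$ conditionally sub-Gaussian with variance proxy $\propto(\gamma_t^{(e)})^2$, and the Azuma–Hoeffding inequality for sub-Gaussian increments gives the same rate with a $\sqrt{\log(1/\delta)}$ factor.

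I expect the main obstacle to be the Lipschitz step. It has two weaknesses.
\begin{itemize}
\item \textbf{Top-$k$ discontinuity.} The selection $\mathcal{T}_k$ makes the policy only piecewise smooth, so a perturbation could in principle flip routing. This is handled by noting that noise enters only the expert heads, not $W_r$, so routing is unchanged by the noise within a step. The piecewise smoothness in (A4) then only needs to hold inside a fixed routing cell.
\item \textbf{Unbounded Gaussian increments.} Global Lipschitzness of $J_t$ is needed because the Gaussian perturbation is unbounded. The simulation-lemma bound is uniform in the policy, so this holds with bounded rewards. The fused Gaussian mean is linear in the expert parameters, which controls the policy $\ell_1$ change through the total-variation bound between Gaussians with a shared $\sigma$.
\end{itemize}
A secondary subtlety is that the noise is applied at epoch granularity while $t$ indexes iterations. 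I would define $\gamma_t^{(e)}$ on the same index at which noise is injected, so that the count $KN$ is exact.
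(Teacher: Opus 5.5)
Your argument fills in what the paper only asserts. The paper states this lemma without proof; the phrase ``martingale difference variance term'' is its entire justification, and it then invokes the lemma directly in the proof of Theorem~\ref{thm:main}. Your route is the one that phrase points to: orthogonal increments, a per-step conditional variance of order $(\gamma_t^{(e)})^2$ from the performance-difference bound composed with (A4), and then Chebyshev. Your evaluation of the two corollaries is correct.

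One step fails as you have written it. With your $\mathcal{F}_{t-1}$ (everything through the end of step $t-1$), the policy $\pi_t$ still depends on the step-$t$ rollouts and minibatch draws. So the key inequality $\mathbb{E}[X_t^2\mid\mathcal{F}_{t-1}]\le c\,(\gamma_t^{(e)})^2$ is false. Take $\gamma_t^{(e)}\equiv 0$: the lemma's right-hand side vanishes, yet $\sum_t X_t$ is still a nondegenerate martingale driven by sampling noise. Its variance is generically of order $\sum_t \eta_t^2$, hence of order $T$ at a constant learning rate.

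Absorbing this part into $\varepsilon_t^{\mathrm{est}}$ via (A3) therefore does not prove the lemma as stated, because its right-hand side has no estimation term. It only relocates a contribution that the noise energy cannot control. Your second reading is the one that works, and it should be fixed at the level of the filtration. Since the paper never defines $\mathcal{F}_{t-1}$, let it be generated by all randomness up to and including the step-$t$ rollouts and minibatch draws, i.e.\ everything except the step-$t$ Gaussian draws $\varepsilon_j$. These $\sigma$-algebras are still increasing, $X_t$ is $\mathcal{F}_t$-measurable, and the step-$t$ noise is independent of $\mathcal{F}_{t-1}$. Your orthogonality, variance and Chebyshev steps then go through unchanged.

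Relatedly, re-indexing $t$ by injection events matters for more than making the count $KN$ exact. In Algorithm~\ref{alg:pemamoe}, each injection is followed by further PPO epochs within the same iteration. If $J_t$ were evaluated at the end of an iteration, $\pi_t$ would depend on $\varepsilon_j$ through later clipped-surrogate gradient steps on ReLU experts. You would then need a Lipschitz bound on that update map, which need not exist. Evaluating $\pi_t$ immediately after each injection, as in Eq.~\eqref{eq:expert-noise}, keeps the noise additive given $\mathcal{F}_{t-1}$, which is exactly what your Lipschitz chain uses.
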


\begin{lemma}[MoE routing stability]
\label{lem:moe_stability}
Under (A4), the mixed policy $\pi_\theta(a\!\mid\!s,\phi) = \sum_{k=1}^K G_k(s,\phi)\,\pi_{\theta_k}(a\!\mid\!s)$ satisfies
\begin{equation}
\|\pi_{\theta'}(\cdot\!\mid\!s,\phi) - \pi_\theta(\cdot\!\mid\!s,\phi)\|_1
\!\le\! L_G \|G_{\theta'}-G_\theta\| \!+\! \sum_{k=1}^K L_k \|\theta_k'\!-\!\theta_k\|,
\end{equation}
with Lipschitz constants $(L_G,L_k)$ that depend on Top-$K$ sparsity and load balancing or entropy regularization. Hence MoE does not violate the TRPO or PPO stability preconditions; it only affects constants in Lemma~\ref{lem:ppo_stability}.~\cite{chen2022moe_theory,baykal2022sparse,zhou2022expertchoice}.
\end{lemma}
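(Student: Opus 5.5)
The proof reduces the mixed-policy $\ell_1$ bound to two elementary perturbation estimates by inserting an intermediate policy that mixes the new gates with the old experts. Write $G=G_\theta(s,\phi)$ and $G'=G_{\theta'}(s,\phi)$ for the gate vectors, and $\pi_k=\pi_{\theta_k}(\cdot\mid s)$, $\pi_k'=\pi_{\theta_k'}(\cdot\mid s)$. Then
\begin{align}
\pi_{\theta'}-\pi_\theta
&=\sum_{k=1}^K (G_k'-G_k)\,\pi_k \nonumber\\
&\quad+\sum_{k=1}^K G_k'\,(\pi_k'-\pi_k).
\end{align}
Applying the triangle inequality in $\|\cdot\|_1$ splits the bound into a routing term and an expert term.

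For the routing term, each $\pi_k$ is a probability density, so $\|\pi_k\|_1=1$. This gives $\|\sum_k (G_k'-G_k)\pi_k\|_1\le \sum_k|G_k'-G_k|=\|G'-G\|_1$. This is of the form $L_G\|G_{\theta'}-G_\theta\|$, with $L_G=1$ for the $\ell_1$ norm, or $L_G=\sqrt{K}$ if the Euclidean norm is used.

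For the expert term, the gates satisfy $0\le G_k'\le 1$ and $\sum_k G_k'=1$, since the renormalized top-$k$ weights $w_{t,j}^i$ lie in the simplex. Hence $\|\sum_k G_k'(\pi_k'-\pi_k)\|_1\le \sum_k \|\pi_k'-\pi_k\|_1$. Each summand is at most $L_k\|\theta_k'-\theta_k\|$ by the expert-Lipschitz part of (A4). For the fused Gaussian head this is concrete: Pinsker's inequality and the closed-form KL between Gaussians with shared $\sigma$ give $\|\pi_k'-\pi_k\|_1\le \|\mu^{(k)\prime}-\mu^{(k)}\|/\sigma_{\min}$. The mean is linear in $(W_\mu^{(k)},b_\mu^{(k)})$, so $L_k\lesssim \|x\|/\sigma_{\min}$. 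This bound is finite because the Phase Controller clamps $\log\sigma$ from below.

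The main obstacle is justifying that $\|G_{\theta'}-G_\theta\|$ is itself controlled, and that its constant depends on top-$K$ sparsity and temperature. The sparse top-$k$ gate is discontinuous when the selected set $\mathcal{T}_k$ changes. I would handle this in two steps.
\begin{enumerate}
\item On regions where $\mathcal{T}_k$ is fixed, the renormalized softmax is smooth. Its Jacobian with respect to the logits is bounded by $1/\tau_t\le 1/\tau_{\min}$, and the logits are Lipschitz in $W_r$ with constant $\|x\|$. Together these give a piecewise-Lipschitz gate.
\item At selection boundaries, the tied logits make the jump bounded by the gate mass that crosses the boundary. One can therefore either absorb the jump into $L_G$ on a margin set, or compare against the dense softmax $g_t^i$ and pay an additive $\mathcal{O}(\text{mass outside top-}k)$ term. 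That term is what load balancing and entropy regularization keep controlled.
\end{enumerate}
This is the sense in which the statement calls the mixed policy piecewise smooth.

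Finally, I would conclude the stability claim. Since $\mathrm{Div}$ in Lemma~\ref{lem:ppo_stability} can be taken as total variation (or KL, related to total variation through Pinsker's inequality), the derived bound implies that a trust-region step in the parameters $(\theta_k,W_r)$ yields a policy step of size at most $L_G\|\Delta G\|+\sum_k L_k\|\Delta\theta_k\|$. The PPO/TRPO preconditions therefore hold with $C_t$ rescaled by these constants, and nothing else in the argument changes.
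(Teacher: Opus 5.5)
Your derivation of the displayed inequality is correct. It differs from the paper only because the paper gives no derivation: the lemma, inequality and ``hence'' alike, restates (A4) and is supported only by citations. Your hybrid decomposition $\pi_{\theta'}-\pi_\theta=\sum_k (G'_k-G_k)\pi_k+\sum_k G'_k(\pi'_k-\pi_k)$, with $\|\pi_k\|_1=1$ and $G'$ in the simplex, obtains the bound from the expert-Lipschitz half of (A4) alone. It gives explicit constants: $L_G=1$ (or $\sqrt{K}$ for the Euclidean norm), and $L_k$ equal to the experts' own $\ell_1$-Lipschitz constants. This buys more than the paper states. It shows the inequality is a structural property of convex mixtures, and that the $L_G$ in it cannot depend on top-$K$ sparsity or load balancing. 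Any such dependence can only enter once $\|G_{\theta'}-G_\theta\|$ is further bounded by the change in router weights, which the paper never does. (A minor point: your Gaussian illustration uses the same $\sigma$ before and after the step. Since $\log\sigma$ is learned, reset and annealed, a change in it needs its own term.)

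The part that fails is your treatment of that router-weight bound at selection boundaries. With the paper's $k=1$, the renormalized gate is one-hot. An arbitrarily small change of $W_r$ that flips the argmax therefore moves $G$ by $\ell_1$-distance $2$, because the mass crossing the boundary is the whole unit. For general $k$ the jump is $2g_{(k)}/\sum_{j\in\mathcal{T}_k}g_j$, with $g_{(k)}$ the $k$-th largest entry of $g$. A jump cannot be absorbed into a Lipschitz constant.
\begin{itemize}
\item Restricting to a margin set is legitimate. But you then need an extra assumption bounding the $d_t^\pi$-probability of states within a margin of order $\|x\|\,\|\Delta W_r\|$ of a routing boundary, and (A4) does not supply one.
\item The dense-softmax comparison costs an additive $1-\sum_{j\in\mathcal{T}_k}g_j$, and your claim that load balancing and entropy regularization keep it small is backwards. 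Entropy regularization and the post-switch temperature increase in \eqref{eq:router-temp-only} flatten $g$ and enlarge that mass; with $E=3$ and $k=1$ it can approach $2/3$. Load balancing acts on batch-averaged usage and does nothing to make per-state routing peaked. The paper also states that no router entropy regularizer is used.
\end{itemize}

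None of this is needed for the lemma as stated. The inequality is phrased in terms of $\|G_{\theta'}-G_\theta\|$, and the trust region in Lemma~\ref{lem:ppo_stability} (PPO ratio clipping, TRPO's KL constraint) is imposed on policies, not parameters. Your first two estimates therefore already give the conclusion that MoE only changes constants. You should drop the claim that a trust-region step in $(\theta_k,W_r)$ yields a bounded policy step. If you want a parameter-space statement, you have two options. One is to state the margin assumption explicitly. The other is to smooth the gate, for example by treating the routing noise as part of the policy, so the effective gate is the selection probability; that is Lipschitz in the logits with a constant of order the inverse noise scale.
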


\subsection{Main result}

\begin{theorem}[Dynamic regret of PE\text{-}MAMoE]
\label{thm:main}
Under Assumptions~\ref{subsec:assumptions} and Lemmas~\ref{lem:ppo_stability}--\ref{lem:moe_stability}, the dynamic regret of PE\text{-}MAMoE over $T$ steps admits
\begin{align}
\mathrm{DynRegret}_T(\text{PE\text{-}MAMoE}) \le &
\widetilde{\mathcal O}\!\big(T^{3/4} V_T^{1/4}\big)
+ \\
& \mathcal O\!\Big(\sqrt{\textstyle\sum_{t=1}^T \gamma_t^2}\Big) +\mathfrak{E}_{\mathrm{est}}(T), \nonumber
\end{align}
where $\mathfrak{E}_{\mathrm{est}}(T)$ aggregates critic or GAE approximation terms.
\end{theorem}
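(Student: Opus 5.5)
The plan is to assemble the bound from the auxiliary lemmas. Lemma~\ref{lem:drift_variation} splits the dynamic regret into three parts: the environment-drift term $\mathcal{D}_T(V_T)$, the optimization cost $\mathcal{O}_T$, and the accumulated estimation errors. The stochastic part of $\mathcal{O}_T$ is then separated off with Lemma~\ref{lem:noise_energy}. Concretely, I would write $J_t(\pi_t)=\mathbb{E}[J_t(\pi_t)\mid\mathcal{F}_{t-1}]+\xi_t$, where $\xi_t$ is the martingale difference created by the expert-only Gaussian perturbation of Eq.~\eqref{eq:expert-noise}. This splits the one-step change $J_t(\pi_t)-J_t(\pi_{t-1})$ into a conditional-mean (deterministic update) part and a noise part.

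\textbf{Step 1: the drift term.} $\mathcal{D}_T(V_T)$ is controlled directly by (A1) and the representative non-stationary RL bound quoted in Lemma~\ref{lem:drift_variation}, which gives $\widetilde{\mathcal O}(T^{3/4}V_T^{1/4})$. The cited sliding-window or restart argument of \cite{cheung2020nonstationary,fei2020dynamic} is taken as given.

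\textbf{Step 2: the conditional-mean optimization part.} I would apply Lemma~\ref{lem:ppo_stability} with $\pi=\pi_{t-1}$ and $\pi'$ equal to the noiseless PPO update. The surrogate advantage $\mathbb{E}_{s\sim d_t^\pi,a\sim\pi'}[A_t^\pi]$ is nonnegative for the trust-region maximizer, because $\pi'=\pi$ is feasible. Hence the negative part is bounded by $C_t\,\mathrm{Div}(\pi'\Vert\pi)+\varepsilon_t^{\mathrm{est}}$.

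To keep the MoE structure from spoiling this, Lemma~\ref{lem:moe_stability} is used to convert parameter changes into $\ell_1$ policy changes. Pinsker's inequality, or the Fisher--Rao form cited in (A2), then relates $\ell_1$ changes to $\mathrm{Div}$. With the router frozen during $K_{\text{freeze}}$ and a clipped or target-KL step, $C_t\,\mathrm{Div}$ is bounded by a constant times the squared trust-region radius.

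The summed divergence terms then have to be absorbed into either the drift term or $\mathfrak{E}_{\mathrm{est}}(T)$. The natural choice is to tune the trust-region radius to the same sliding-window scale as Step 1, so that this contribution is also $\widetilde{\mathcal O}(T^{3/4}V_T^{1/4})$. The remaining $\varepsilon_t^{\mathrm{est}}$ together with the extra $\sum_t\varepsilon_t^{\mathrm{est}}$ from Lemma~\ref{lem:drift_variation} are collected into $\mathfrak{E}_{\mathrm{est}}(T)$, which is summable by (A3).

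\textbf{Step 3: the noise part.} Since $(\cdot)^-$ is subadditive, $\sum_t(\cdot)^-$ is at most the deterministic part plus $\sum_t|\xi_t|$. Bounding $\sum_t|\xi_t|$ would only give a linear-in-$\gamma$ sum, so I would instead bound the signed sum $\sum_t\xi_t$ and avoid taking negative parts of the noise. This works because the regret identity is linear in $J_t(\pi_t)$ before the $(\cdot)^-$ step is taken.

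The conditional variance of $\xi_t$ is bounded using the Lipschitz property (A4): it is at most $L_k^2\,\mathbb{E}\|\eta_t\gamma_t^{(e)}\varepsilon_j\|^2\lesssim(\gamma_t^{(e)})^2$, since only the $k$ selected experts are perturbed. Lemma~\ref{lem:noise_energy}, via Freedman or Azuma, then gives $\mathcal O_{\mathbb P}\big(\sqrt{\sum_t\gamma_t^2}\big)$, which is finite by (A5).

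Summing the three steps yields the theorem. The main obstacle is the bookkeeping in Steps 2--3. The negative-part operator in Lemma~\ref{lem:drift_variation} does not commute with taking conditional expectations, so I would restate that decomposition in signed (telescoped) form before introducing $\xi_t$. I would also need to argue that the accumulated trust-region divergence is dominated by the variation term rather than growing linearly in $T$; if that fails, it must be folded into the $\widetilde{\mathcal O}$ constant through the step-size choice.
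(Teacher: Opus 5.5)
Your skeleton matches the paper's proof: Lemma~\ref{lem:drift_variation}, then the quoted $\widetilde{\mathcal O}(T^{3/4}V_T^{1/4})$ bound for $\mathcal D_T(V_T)$, then Lemma~\ref{lem:ppo_stability} (with Lemma~\ref{lem:moe_stability} certifying its preconditions) for $\mathcal O_T$, then Lemma~\ref{lem:noise_energy} for the perturbation penalty, with estimation errors collected into $\mathfrak E_{\mathrm{est}}(T)$. The paper only adds these contributions and never addresses the two obstacles you raise, so you are not missing an idea it supplies. The obstacles are nevertheless genuine gaps, and your proposed patches do not close them.

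First, Lemma~\ref{lem:ppo_stability} leaves the remainder $\sum_t C_t\,\mathrm{Div}(\pi_t\Vert\pi_{t-1})$, and the theorem has no slot for it, since $\mathfrak E_{\mathrm{est}}$ only aggregates critic/GAE error. PE-MAMoE runs with a fixed clip coefficient ($0.15$) and target KL ($0.1$), independent of $T$ and $V_T$. Nothing in (A1)--(A5) or the lemmas prevents this sum from growing like $\Theta(T)$. Shrinking the trust-region radius to a $(V_T/T)$-dependent scale would analyze a different algorithm, and it presupposes knowledge of $V_T$. Nor can a term linear in $T$ be ``folded into the constant'' of $\widetilde{\mathcal O}(T^{3/4}V_T^{1/4})$ unless $V_T$ is of order $T$.

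Second, the noise bookkeeping has two problems.
\begin{itemize}
\item Replacing $(x)^-$ by $-x$ in Lemma~\ref{lem:drift_variation} gives a stronger inequality (since $(x)^-\ge -x$) that is not among the hypotheses. It also cannot be recovered by telescoping: summing the exact one-step identities controls only the terminal gap $J_T(\pi_T^*)-J_T(\pi_T)$, not the cumulative regret.
\item The other reading of your remark is to substitute $J_t(\pi_t)=\mathbb E[J_t(\pi_t)\mid\mathcal F_{t-1}]+\xi_t$ directly into $\mathrm{DynRegret}_T$, which is linear in $J_t(\pi_t)$. That step is valid and isolates exactly the martingale sum of Lemma~\ref{lem:noise_energy}. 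But your Step~2 then treats $\mathbb E[J_t(\pi_t)\mid\mathcal F_{t-1}]$ as $J_t$ evaluated at the noiseless PPO update $\pi_t^0$. These differ by the bias of zero-mean parameter noise. The Lipschitz assumption (A4) controls that bias only to first order, $O(\gamma_t)$ per perturbed update; an $O(\gamma_t^2)$ bound would require smoothness. Summed, this gives $O(\sum_t\gamma_t)$, e.g.\ $O(\gamma KN)$ rather than $O(\gamma\sqrt{KN})$ for the pulse schedule, and Lemma~\ref{lem:noise_energy} says nothing about it.
\end{itemize}

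Closing the argument therefore needs two extra hypotheses: a summability condition on the trust-region remainders, and a bias bound for the perturbed updates. The paper's proof leaves both implicit.
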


\begin{proof}
By Lemma~\ref{lem:drift_variation},
\begin{equation}
\mathrm{DynRegret}_T \le \mathcal{D}_T(V_T) + \mathcal{O}_T + \sum_t \varepsilon_t^{\mathrm{est}}.
\end{equation}
The drift term $\mathcal{D}_T(V_T)$ yields the $\widetilde{\mathcal O}(T^{3/4} V_T^{1/4})$ dependence under standard non-stationary MDP arguments~\cite{cheung2020nonstationary,fei2020dynamic}. For $\mathcal{O}_T$, Lemma~\ref{lem:ppo_stability} upper bounds the optimization cost by trust region remainders; MoE routing respects the stability preconditions by Lemma~\ref{lem:moe_stability}. The additional penalty from expert perturbations scales as $\mathcal O(\sqrt{\sum_t \gamma_t^2})$ by Lemma~\ref{lem:noise_energy}. Summing all contributions and grouping estimation terms into $\mathfrak{E}_{\mathrm{est}}(T)$ proves the claim.
\end{proof}

\section{Results and Analysis}

This section evaluates PE-MAMoE through a progression from setup and internal diagnostics to system level outcomes and sensitivity analysis. Section~\ref{subsec:sim_env} describes the phase driven simulation environment and its parameters. Section~\ref{subsec:baselines} introduces the three baselines that share the same MAPPO pipeline but lack plasticity mechanisms. Section~\ref{subsec:effectiveness} examines effectiveness from two complementary angles: plasticity indicators (expert feature rank and dormant neuron fraction) that reveal whether the representation stays expressive across regime switches, and task performance metrics (return and normalized IQM) that quantify the end result. Section~\ref{subsec:expert_selection} inspects how the policy and value routers redistribute expert traffic at phase boundaries, linking routing diversity to the plasticity and load balancing design. Section~\ref{subsec:system_performance} reports system level metrics, including collision rate, energy consumption, served user count, and UAV utilization, to assess operational robustness. Section~\ref{subsec:internal_states} visualizes action level responses to regime switches, connecting internal control behavior to the system outcomes observed earlier. Finally, Section~\ref{subsec:ablation_study} isolates individual design components via ablation and sweeps the noise injection threshold to characterize the sensitivity of plasticity injection.

\subsection{Simulation Environment}
\label{subsec:sim_env}

We simulate a $1~\mathrm{km}\!\times\!1~\mathrm{km}$ post disaster urban square with slotted time $\Delta t$. Non-stationarity is driven by a phase variable $\phi_t\!\in\!\{0,1,2\}$ that jointly switches user mobility regime and demand mix, as well as radio reuse ($F$) and UAV service radius $R_{\mathrm{srv}}(\phi_t)$. Users follow phase dependent crowd like mobility: as the system transitions from low to medium and high demand phases, both the nominal speed and directional persistence increase, which induces abrupt density shifts and hotspot formation at the switch times. The simulation parameters are summarized in Table~\ref{tab:sim-config}.

\begin{table}[t]
\centering
\scriptsize
\setlength{\tabcolsep}{3pt}
\caption{Simulation and training parameters.}
\label{tab:sim-config}

\begin{adjustbox}{width=0.92\columnwidth,center} 
\begin{tabularx}{\linewidth}{
  @{} >{\raggedright\arraybackslash}p{.47\linewidth}
      >{\raggedright\arraybackslash}X @{}}
\toprule
Area & $1000 \times 1000$ (m) \\
Number of UAVs & $3$ \\
Concurrent users & $20$ \\
Time step & $60$ s \\
Episode length & $32$ steps \\
Service radius & $\{200, 150, 150\}$ (m) \\
Max users per UAV & $5$ \\
Reuse factor & $1$ \\
Adjacent leakage & $10^{-3}$ \\
\midrule
\multicolumn{2}{@{}l}{\textbf{Mobility and Demand Phases}}\\
\midrule
Phase model & phases $[0,1,2]$ \\
Repeat phases & $3$ cycles \\
User moves & \texttt{True} \\
User speed & $\{0.2, 0.5, 0.8\}$ (m/s) \\
Direction jitter& $\{0.2, 0.5, 0.8\}$ \\
Mobility model & GM  \\
GM memory & $0.9$ \\
GM noise & $0.08$ (m/s) \\
RPGM & enabled; groups $=3$, follow gain $=0.3$ \\
RPGM follow noise & $0.15$ (m/s) \\
RPGM center speed & $1.2\times$ \texttt{User\_speed} \\
Demand mode & \texttt{cycle}; cycle length $=3$ (L$\to$M$\to$H) \\
Demand thresholds (Mbps) & L: $0.5$, M: $1.0$, H: $2.0$ \\
\midrule
\multicolumn{2}{@{}l}{\textbf{Reward / Penalties}}\\
\midrule
Success cover reward (a user) $\mu_t^{\text{qoe}}$ & L: $5.0$, M: $10.0$, H: $20.0$ \\
Energy penalty $\mu_t^{\text{ene}}$ & $0.1$ \\
Collision penalty $\mu_t^{\text{col}}$ & $100.0$ \\
Overlap penalty $\lambda_{\text{ov}}$ & $\{20, 40, 80\}$ \\
Overlap scale $\sigma$ & $150$ m \\
\midrule
\multicolumn{2}{@{}l}{\textbf{Policy / Value Network}}\\
\midrule
Actor/Value type & \texttt{mlp, moe, smoe, PEMAMoE} \\
Activation & ReLU \\
Experts per MoE & $3$ \\
Top-$k$ routing & $k=1$ \\
Expert hidden size & $32$ \\
\midrule
\multicolumn{2}{@{}l}{\textbf{PPO / MAPPO Training}}\\
\midrule
Total timesteps & $22{,}118{,}400$ \\
Rollout length & $2048$ \\
Minibatches & $32$ \\
Update epochs & $8$ \\
Learning rate & $3\times 10^{-4}$ \\
Weight decay & $10^{-4}$ \\
Adam $(\beta_1,\beta_2)$ & $(0.9,\,0.999)$ \\
Discount $\gamma$ & $0.99$ \\
GAE $\lambda$ & $0.95$ \\
Clip coef. & $0.15$ \\
Value loss coef. & $2.0$ \\
Entropy coef.\ start/end & $0.01 \rightarrow 0.003$ \\
Entropy anneal portion & $0.3$ (of training) \\
Anneal LR & \texttt{False} \\
Target KL & $0.1$ \\
Max grad norm & $0.5$ \\
\midrule
\multicolumn{2}{@{}l}{\textbf{Switch-time Schedules}}\\
\midrule
KL-to-ref (init / decay) & $0.1$ / $0.985$ per iteration \\
Warm window after switch & $K_{\text{warm}}=80$ iters \\
Freeze router window & $K_{\text{freeze\_router}}=30$ iters \\
Phase LR multiplier & $0.3$ (warmup back over $200$ iters) \\
Log-std reset & $0.3$ \\
Noise Injection epochs & $30$ \\
\bottomrule
\end{tabularx}
\end{adjustbox}
\end{table}

\subsection{Baselines}
\label{subsec:baselines}

We benchmark PE\text{-}MAMoE against three non-plasticity baselines implemented under the same CTDE and MAPPO pipeline and environment dynamics:

\paragraph{MLP.}
A shared parameter multilayer perceptron without conditional computation. The actor and critic are standard feedforward networks, trained with PPO and GAE; no expert routing, load balancing, or stochasticity injection is used. This represents a strong vanilla baseline commonly adopted in MADRL. 

\paragraph{MoE.}
A mixture of experts policy where a learned router mixes multiple expert sub-networks to produce the action distribution, but without plasticity injection, router temperature and entropy scheduling. Experts are trained jointly via the PPO loss; the router outputs a combination over all experts each step. This baseline isolates the benefit of conditional computation from plasticity maintenance. 

\paragraph{Sparse MoE.}
A sparse MoE variant that routes each input to the top-$k$ experts and combines only their outputs; no plasticity injection is applied. This reduces communication versus dense mixing and is widely used in large scale sparse models. Our SMoE matches the MoE capacity while enforcing sparse activation through top-$k$ gating. 

\medskip
\noindent\textbf{Implementation notes.}
All baselines share the same observation and action interfaces, PPO hyperparameters, and expert counts and hidden sizes as PE\text{-}MAMoE for fairness. The MoE and SMoE baselines differ only in their routing rule and exclude any plasticity preserving mechanisms, thereby attributing gains to conditional computation alone rather than re-plasticization.

\paragraph{Baseline selection rationale}
Our baseline set is designed to perform a controlled \emph{architectural} ablation within a single, consistent MAPPO pipeline: MLP tests whether a monolithic network suffices, dense MoE tests whether conditional computation alone helps, and Sparse MoE tests whether sparsity improves over dense mixing. PE-MAMoE adds plasticity injection on top of Sparse MoE, so the progression MLP $\to$ MoE $\to$ SMoE $\to$ PE-MAMoE isolates each design factor. We acknowledge that the current set does not include methods from the non-stationary RL or continual learning literature, such as periodic full-network resets~\cite{nikishin2022primacy}, elastic weight consolidation (EWC)~\cite{kirkpatrick2017ewc}, policy distillation, or meta-RL approaches~\cite{finn2017maml}. These methods originate from different problem formulations (single-agent, supervised, or task-family settings) and would require non-trivial adaptation to our cooperative MAPPO pipeline with shared parameters and phase-driven objective reweighting. In particular, full-network reset destroys all inter-agent coordination conventions; EWC requires per-task Fisher information matrices that are ill-defined when tasks blend continuously; and meta-RL presupposes a task distribution for meta-training, which is unavailable in our single-environment, phase-cycling setup. We regard comparison with suitably adapted versions of these methods as valuable future work.

\paragraph{Environment generality}
Our UAV--ECN simulator is released as open-source and exposes a standard Gymnasium-compatible multi-agent interface (observation dicts, continuous action spaces, step/reset protocol). Any MARL algorithm that accepts this interface can be evaluated without modification to the environment code. The phase-driven demand and mobility switching mechanism is a configurable module: phase count, duration, cycling order, demand thresholds, and mobility parameters are all exposed as configuration variables (Table~\ref{tab:sim-config}). We deliberately chose this design so that the environment is not tailored to PE-MAMoE; the same simulator equally challenges any policy, as evidenced by the fact that all four methods, including the plain MLP baseline, were trained and evaluated under identical dynamics with no environment-side adjustments.

\paragraph{Computational cost}
Table~\ref{tab:cost} compares the parameter count and per-step inference cost of each method. With an observation dimension of $177$, expert hidden size $32$, and $E{=}3$ experts, the MLP actor contains roughly $6.8$k parameters. MoE and SMoE triple the expert parameters and add a router, reaching about $20.9$k and $21.4$k, respectively. PE-MAMoE has the same parametric footprint as SMoE ($21.4$k) because its Phase Controller (temperature annealing, learning rate warmup, log-$\sigma$ reset, and noise injection) operates entirely through non-gradient schedules and adds no learnable parameters. At inference, top-$1$ routing activates only one expert plus the router per step, so the effective multiply-accumulate cost is approximately $1.16\times$ that of MLP despite the $3.1\times$ parameter ratio, retaining the conditional computation advantage of sparse MoE. Training wall-clock time increases by roughly 15--20\% over MLP, predominantly from the router forward pass and the per-switch noise injection window ($N{=}30$ epochs out of $\sim$10\,800 total).

\begin{table}[!t]
\centering
\scriptsize
\renewcommand{\arraystretch}{1.2}
\caption{Computational cost comparison (actor network).}
\label{tab:cost}
\begin{tabular}{lccc}
\toprule
Method & Actor params & Params vs.\ MLP & Inference active params \\
\midrule
MLP        & 6\,820   & 1.00$\times$  & 6\,820 (all) \\
MoE        & 20\,858  & 3.06$\times$  & 20\,858 (all experts) \\
SMoE       & 21\,392  & 3.14$\times$  & 7\,820 (top-1 + router) \\
PE-MAMoE   & 21\,392  & 3.14$\times$  & 7\,820 (top-1 + router) \\
\bottomrule
\end{tabular}
\end{table}

\subsection{Effectiveness of PE\text{-}MAMoE}
\label{subsec:effectiveness}

\subsubsection{Plasticity indicators: feature rank and dormant neurons}
Fig.~\ref{fig:rank} reports the average feature rank across experts over training.%
\footnote{We use the stable rank of expert activations as a proxy for representational expressiveness; rank collapse is associated with loss of plasticity~\cite{lyle2023understanding}.}
PE\text{-}MAMoE consistently maintains the highest rank among all methods and exhibits a clear phase-synchronous, periodic pattern: at each environment switch, the rank briefly decreases and then recovers to a high plateau before the next switch. This cyclical recovery evidences rapid re-plasticization followed by re-stabilization under the new objective, rather than settling into a low-rank, brittle representation. In contrast, vanilla MoE quickly drifts toward a low, nearly stationary rank, and SMoE shows step wise degradations that only partially recover. 

\begin{figure}[t]
    \centering
    \includegraphics[width=\linewidth]{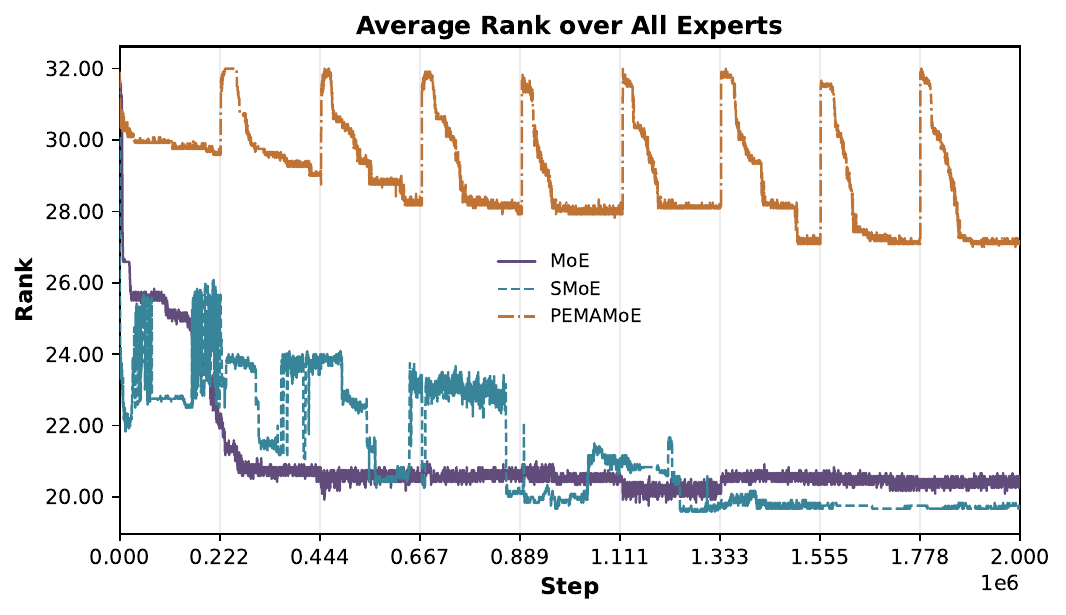}
    \caption{Evolution of the average expert feature rank over training.}
    \label{fig:rank}
\end{figure}

Fig.~\ref{fig:dormant} complements this view using the dormant neuron fraction. While PE\text{-}MAMoE’s DNF can be temporarily higher than SMoE within a phase, it exhibits periodic troughs aligned with regime switches, indicating reactivation of previously quiescent channels. This is precisely the behavior desired under non-stationarity: neurons are recruited when objectives change and then pruned back as the policy re-stabilizes. By contrast, vanilla MoE accumulates a large, persistent dormant set, consistent with the ``dormant neuron phenomenon" that harms expressivity in deep RL~\cite{sokar2023dormant}. The periodic reactivation in PE\text{-}MAMoE is enabled by its stochasticity based plasticity injection and sparse routing, which together counteract primacy bias and sustained neuron dormancy.

\begin{figure}[t]
    \centering
    \includegraphics[width=\linewidth]{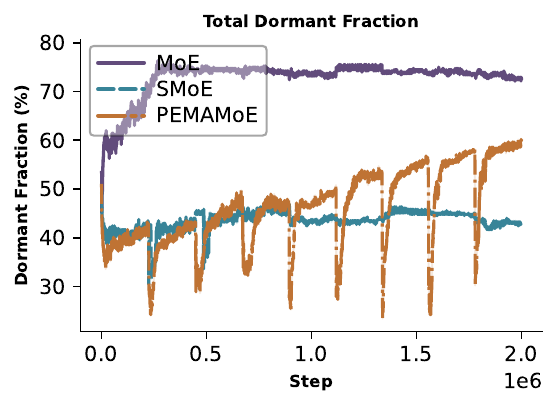}
    \caption{Evolution of the total dormant neuron fraction over training.}
    \label{fig:dormant}
\end{figure}

\subsubsection{Task performance and robustness under phase switches}
Fig.~\ref{fig:reward} compares the mean return across methods. PE\text{-}MAMoE delivers the highest asymptotic returns in most phases and the smoothest recovery after each abrupt switch, whereas MoE and SMoE suffer deeper dips and slower rebounds. This aligns with the theory that conditional computation plus controlled stochasticity bolsters adaptation while avoiding long term plasticity loss.

\begin{figure}[t]
    \centering
    \includegraphics[width=\linewidth]{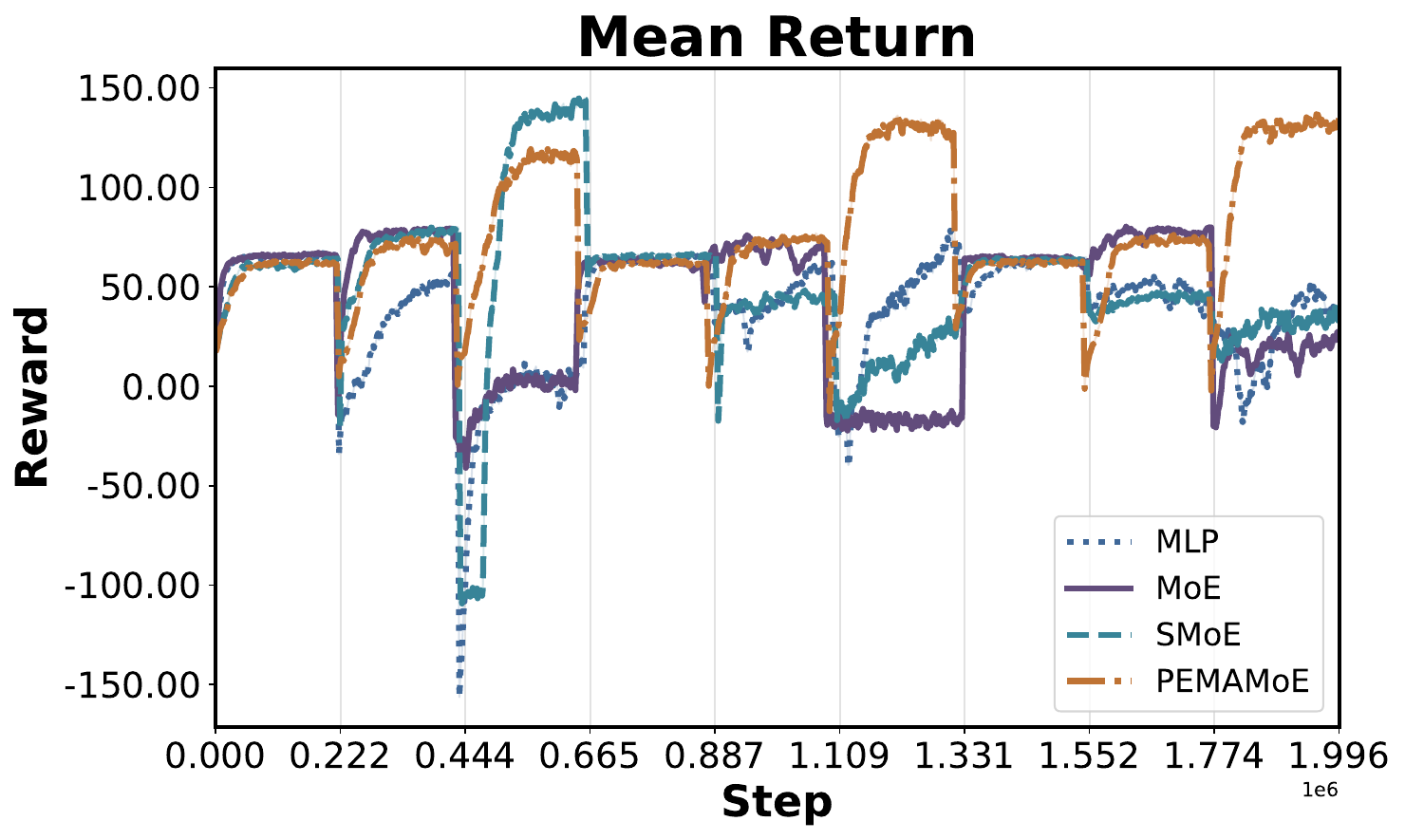}
    \caption{Mean Return over Training.}
    \label{fig:reward}
\end{figure}

To summarize performance across multiple objectives, Fig.~\ref{fig:uav_bar} reports the normalized interquartile mean computed over five metrics: coverage, return, and served users, as well as energy consumption and collision rate. Following the robust evaluation protocol of Agarwal \emph{et al.}~\cite{Agarwal2021Rliable}, IQM is computed by sorting the per-episode metric values recorded during training, discarding the bottom 25\% and the top 25\%, and averaging the remaining middle 50\%; this trims outlier episodes and gives a more reliable central tendency than the ordinary mean. Each method is trained with a single long run under the same random seed to ensure identical environment stochasticity across comparisons. For cross-metric visualization in Fig.~\ref{fig:uav_bar}, we apply per-metric min--max normalization: for each metric $m$, let $v_{\min}^{m}$ and $v_{\max}^{m}$ be the smallest and largest IQM values among all methods, then the normalized score of method $a$ is $({\mathrm{IQM}}_a^{m} - v_{\min}^{m}) / (v_{\max}^{m} - v_{\min}^{m})$. This rescales every metric to $[0,1]$, allowing direct comparison of quantities with different units. Error bars show standard error of the IQM computed across episodes. PE\text{-}MAMoE dominates or ties the best on most metrics: it achieves the highest normalized IQM on Coverage, Return, and ServedNumber, matches the best EnergyUse, and yields the lowest Collisions. Combined with the smallest return dips and fastest post switch recovery in Fig.~\ref{fig:reward}, these results indicate that PE\text{-}MAMoE is the most robust method under regime jumps, validating its design goal: maintain high performance while preserving plasticity across dynamic objectives.

Across plasticity proxies, including a high feature rank that recovers periodically and phase synchronous troughs in DNF, as well as end task metrics such as return and normalized IQM, PE\text{-}MAMoE consistently adapts the fastest and stabilizes at the highest level after objective switches. This provides converging evidence that sparse MoE routing combined with annealed stochasticity injection mitigates dormant neuron accumulation and plasticity loss, yielding state of the art robustness for non-stationary UAV--ECN control.

\begin{figure}[t]
    \centering
    \includegraphics[width=\linewidth]{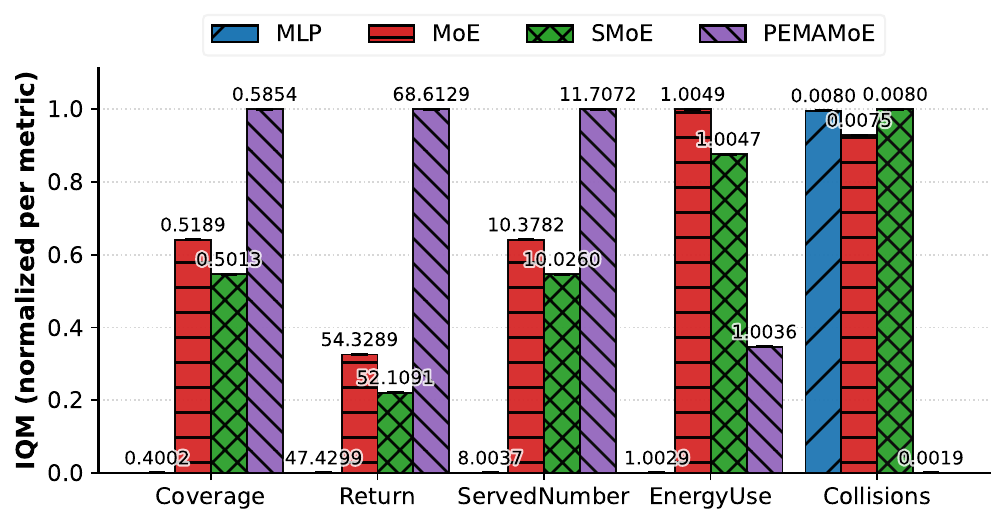}
    \caption{Normalized IQM across five system metrics. Each bar shows the per-metric min--max normalized IQM (middle 50\% of episode values); error bars indicate standard error. Higher is better for Coverage, Return, and ServedNumber; lower is better for EnergyUse and Collisions (plotted as $1-\text{normalized value}$ so that taller bars remain preferable).}
    \label{fig:uav_bar}
\end{figure}

\subsection{Selection Probabilities of Each Expert}
\label{subsec:expert_selection}

\paragraph{Policy router}
Fig.~\ref{fig:router_policy} plots the routing rate, the probability that the policy router selects a given expert, over training. PE\text{-}MAMoE reveals two clear behaviors. First, its selection probabilities remain well spread across experts rather than collapsing onto a single expert, indicating sustained capacity usage and diversity. Second, the routing shows clear phase synchronous cycles: after each switch, the router transiently re-allocates traffic and then re-stabilizes to a new mixture, repeating this pattern at every subsequent switch.
This cyclic redistribution is consistent with our plasticity mechanism and with the design goal of conditional computation: adapt expert usage when objectives change, and consolidate once the new regime is learned.

\begin{figure}[t]
  \centering
  \subfloat[Policy-side expert routing probability]{
    \includegraphics[width=0.98\linewidth]{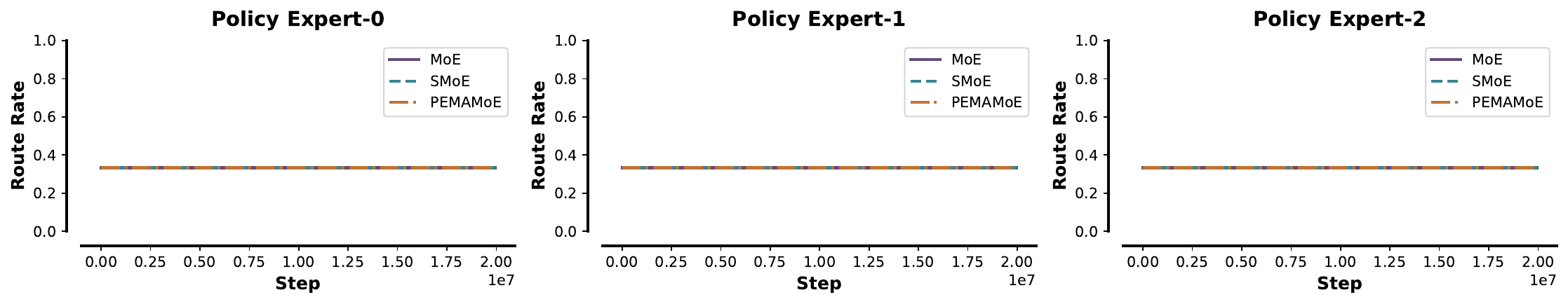}
    \label{fig:router_policy}
  }\\[0.6em]
  \subfloat[Value-side expert routing probability]{
    \includegraphics[width=0.98\linewidth]{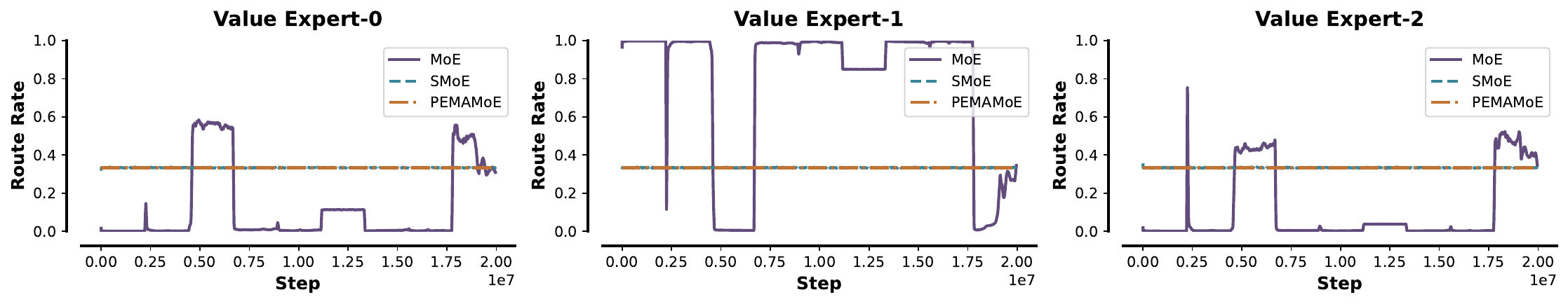}
    \label{fig:router_value}
  }
  \caption{Probability distribution of each expert selected for Policy and Value.}
  \label{fig:router}
\end{figure}

\paragraph{Value router}
Fig.~\ref{fig:router_value} shows the centralized critic router’s selection probabilities. The same trends emerge: PE\text{-}MAMoE maintains balanced, non-degenerate expert usage and rapidly reconfigures at phase boundaries, then settles to a stable allocation within each phase. This indicates that not only the actor but also the critic leverages specialized experts per regime, which helps stabilize advantages and speeds re-adaptation after switches. In contrast, MoE exhibits longer periods of over reliance on a subset of experts and slower post switch re-balancing.

Sustained diversity in routing probabilities is a necessary condition for avoiding the well known MoE failure modes of expert collapse and dropped tokens under load imbalance~\cite{Shazeer2017MoE, zhou2022expertchoice}. The observed periodic reallocation in PE\text{-}MAMoE is precisely the expected outcome of combining sparse MoE routing with controlled stochasticity: noise in the gate encourages exploration and prevents early commitment, while load balancing and entropy regularization preserves capacity across experts. The router’s quick, phase synchronous shifts are consistent with resetting or perturbation remedies for primacy bias, facilitating fast “re-plasticization” at regime changes, then annealing back to stability~\cite{nikishin2022primacy}. Overall, the selection probability dynamics corroborate that PE\text{-}MAMoE actively reallocates expertise when objectives change while retaining balanced utilization within each phase, a prerequisite for the superior robustness observed in return and IQM metrics.

\paragraph{Top-1 routing and expert diversity}
A legitimate concern with top-$1$ ($k{=}1$) routing is that a single expert may dominate and the remaining experts become idle. Fig.~\ref{fig:router_policy} directly addresses this: across the entire training run, PE-MAMoE's per-expert selection probabilities remain distributed among all three experts rather than collapsing onto one. Two mechanisms contribute. First, the noisy top-$k$ gating (Section~\ref{subsec:moe_arch}) adds softplus-scaled Gaussian noise to the router logits, so that even when one expert has the highest mean logit, stochastic tie-breaking keeps rival experts engaged. Second, the temperature schedule (Eq.~\eqref{eq:router-temp-only}) raises $\tau$ immediately after each switch, flattening the softmax and broadening the selection distribution when diversity matters most. In contrast, SMoE, which uses the same top-$1$ rule but lacks the plasticity mechanisms, shows visible periods of single-expert dominance (Fig.~\ref{fig:router_policy}), confirming that top-$1$ routing alone is insufficient and that the diversity is maintained by the combined plasticity design rather than by the routing rule itself. Regarding overhead, Table~\ref{tab:cost} shows that the additional parameter cost of maintaining three experts is modest ($3.1\times$ MLP), and top-$1$ activation keeps the per-step inference cost at only $1.16\times$ MLP. Scaling to larger expert pools ($E{>}3$) or higher $k$ values is a natural extension; we expect that load-balancing regularizers from the large-scale MoE literature~\cite{fedus2022switch,zhou2022expertchoice} will remain effective, and we flag this as an avenue for future work.

\subsection{System Performance}
\label{subsec:system_performance}

\paragraph{Safety under regime switches}
Fig.~\ref{fig:collisions} shows the collisions per episode across training. PE\text{-}MAMoE maintains near zero collision rates throughout, with only shallow transients at phase boundaries. In contrast, MoE and SMoE exhibit pronounced collision spikes after several switches, and the plain MLP baseline occasionally becomes unstable for extended intervals. In multi-agent control, collision rate is a primary safety key performance indicator (KPI) and a canonical RL penalty term; thus the consistently low collision profile indicates that PE\text{-}MAMoE preserves safe coordination while adapting to non-stationary objectives.

\begin{figure}[t]
    \centering
    \includegraphics[width=\linewidth]{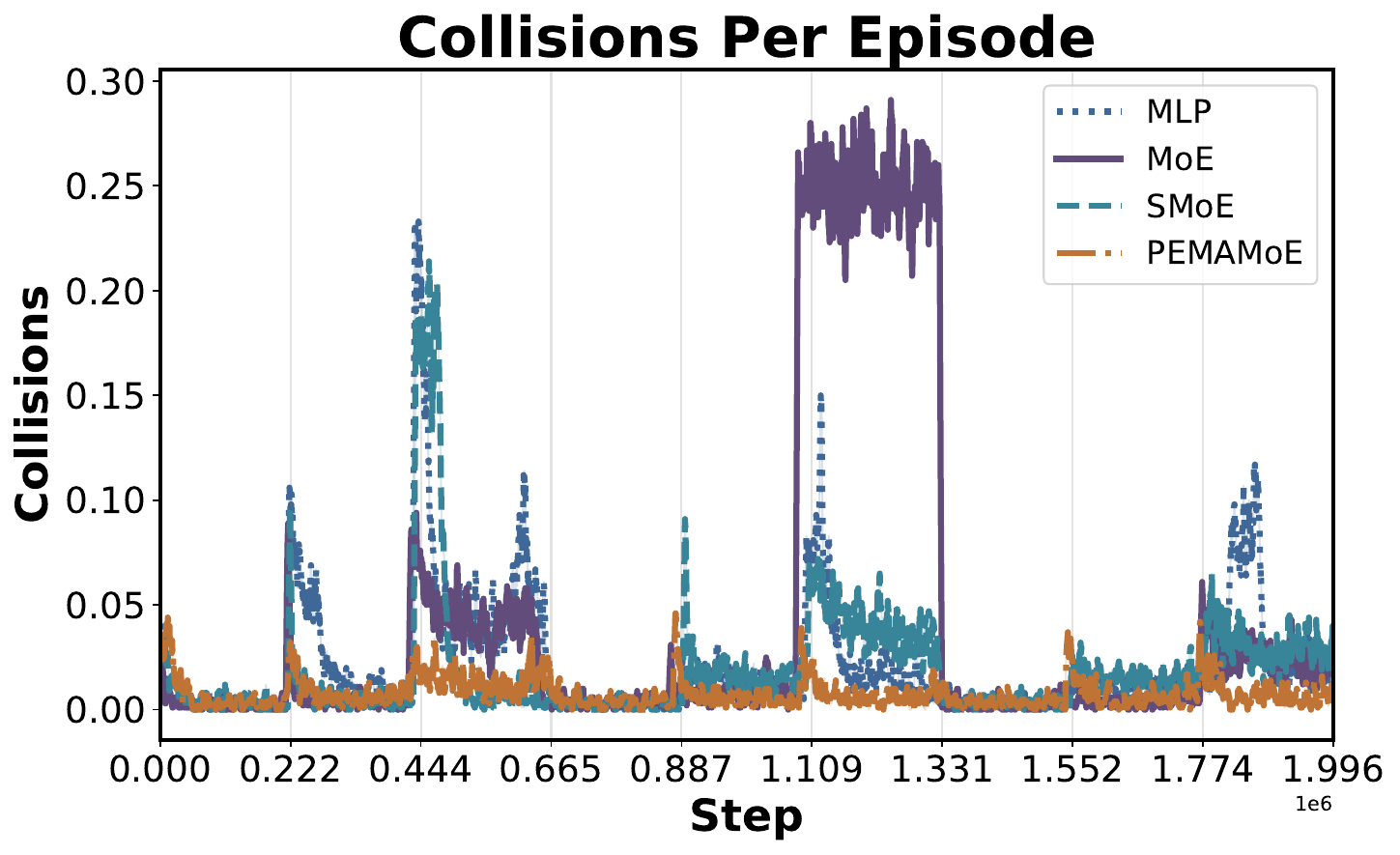}
    \caption{Collisions per Episode over Training.}
    \label{fig:collisions}
\end{figure}

\paragraph{Energy consumption and efficiency}
Fig.~\ref{fig:energy} reports the mean UAV energy rate. PE\text{-}MAMoE tracks the most energy efficient methods in steady regime and, crucially, avoids the severe energy dips experienced by MoE and SMoE when objectives switch. This matters because energy efficiency in wireless systems is fundamentally measured in bit/Joule (or its inverse Joule/bit), and transient collapses in rate or power control degrade the bits per Joule metric even if average power is similar. The smoother PE\text{-}MAMoE traces imply better energy per bit behavior during re-adaptation, consistent with our feasibility driven power allocation and the plasticity aware router that prevents large, wasteful corrections after phase changes.

\begin{figure}[t]
    \centering
    \includegraphics[width=\linewidth]{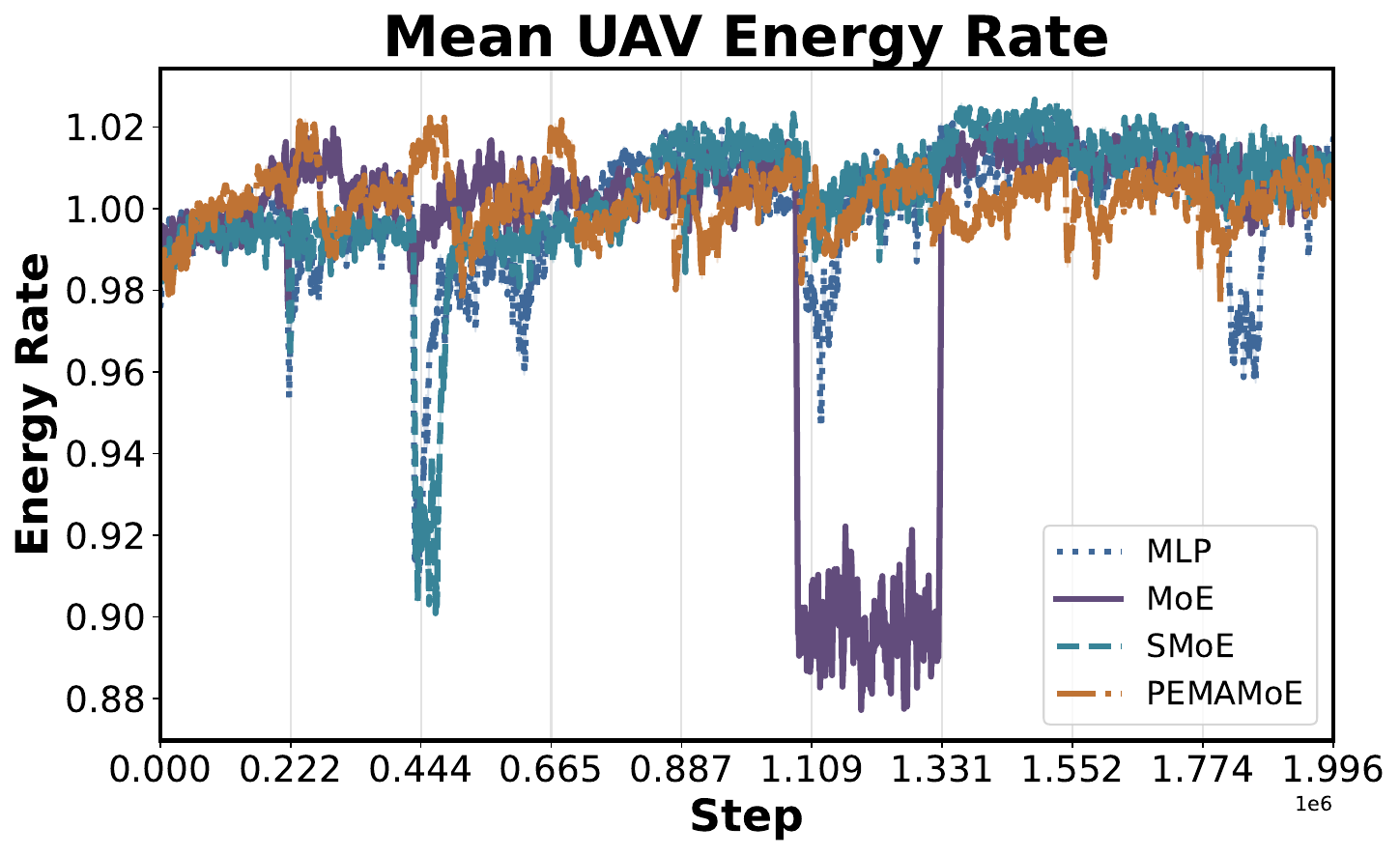}
    \caption{Mean UAV Energy Rate over Training.}
    \label{fig:energy}
\end{figure}

\paragraph{Service capacity and resource utilization}
Fig.~\ref{fig:sys_serve_users} reports the mean successfully served users; Fig.~\ref{fig:sys_uav_using_rate} shows the mean UAV utilization rate. Across phases, PE\text{-}MAMoE reaches the highest or tied highest stable number of served users and exhibits the fastest recovery after each abrupt switch. Simultaneously, its utilization remains high and stable, indicating effective load balancing across UAVs rather than overloading a subset. As served user count is a direct proxy for network capacity, these results, together with the energy profile, suggest that PE\text{-}MAMoE converts bandwidth and power into delivered service more reliably than competing methods.

\begin{figure}[t]
  \centering
  \captionsetup[subfloat]{labelformat=empty} 
  \subfloat[(a) Successfully Served Users]%
  {\includegraphics[width=0.98\linewidth]{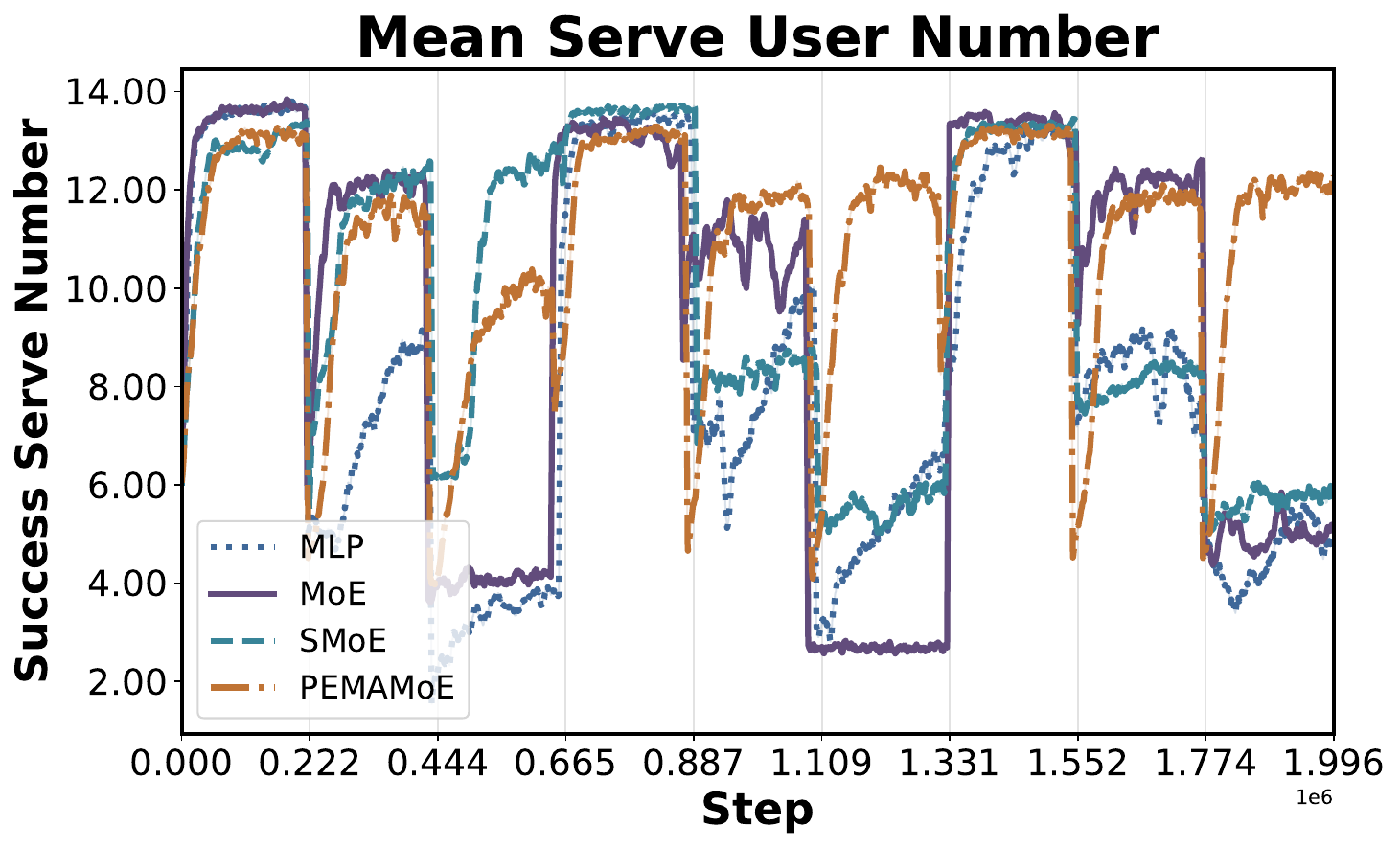}\label{fig:sys_serve_users}}\\[0.6em]
  \subfloat[(b) UAV Using Rate]%
  {\includegraphics[width=0.98\linewidth]{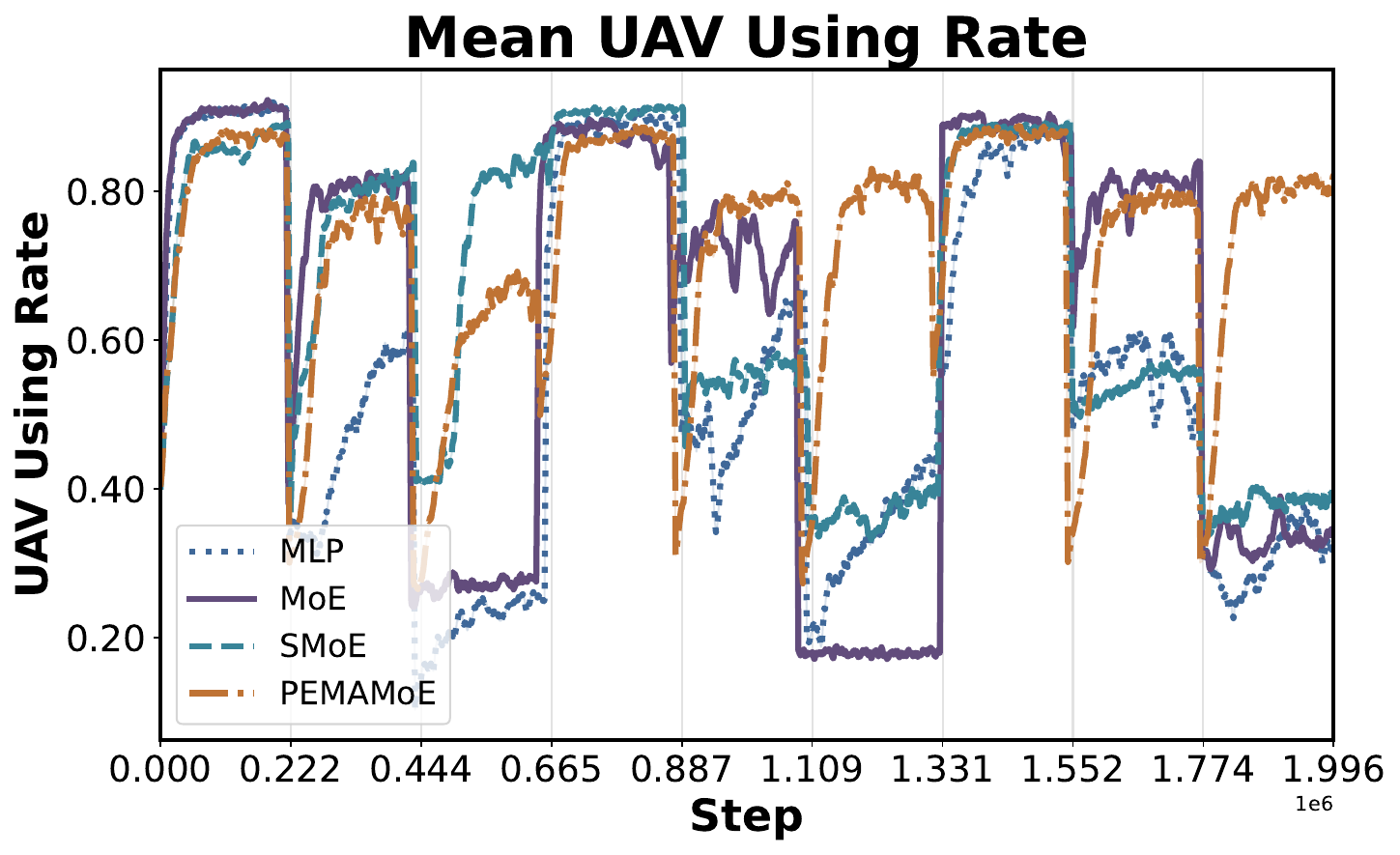}\label{fig:sys_uav_using_rate}}

  \caption{Performance comparison based on UAVs served users.}
  \label{fig:sys_metrics}
\end{figure}

Across safety, efficiency, and capacity, PE\text{-}MAMoE delivers the most consistent system level performance under phase driven non-stationarity. These outcomes corroborate our design: sparse MoE routing prevents expert collapse, while annealed plasticity injection enables quick re-plasticization at regime changes without sacrificing safety or energy efficiency.

\subsection{Influence on System Internal States}
\label{subsec:internal_states}

\paragraph{Action-level adaptation to regime switches}
Fig.~\ref{fig:action} reports the mean action over UAVs as phases change. PE\text{-}MAMoE reacts with clear but bounded action shifts exactly at the switch points, then quickly re-centers to a new operating level within each phase. In contrast, MLP exhibits large oscillations and prolonged overshoot, while vanilla MoE under reacts in some phases and SMoE often shows delayed, high variance corrections. These curves indicate that PE\text{-}MAMoE achieves a good balance between responsiveness and stability: making decisive adjustments when the target changes, followed by stable control with minimal jitter.

\begin{figure}[t]
    \centering
    \includegraphics[width=\linewidth]{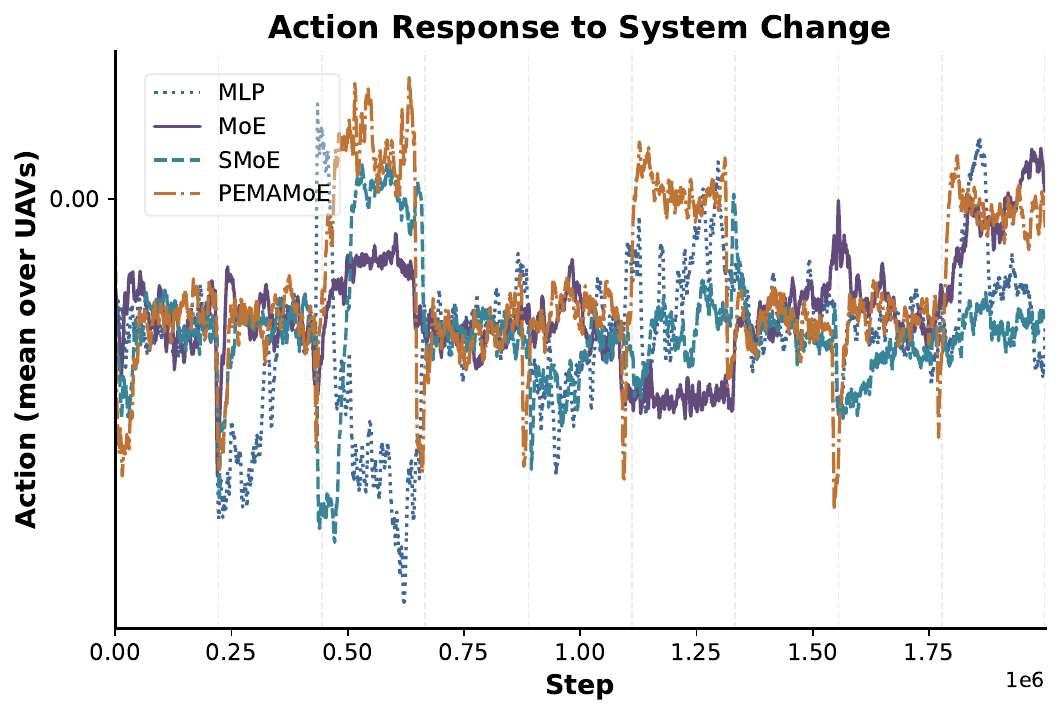}
    \caption{Mean Action Response to Phase Switches.}
    \label{fig:action}
\end{figure}

\paragraph{Why the actions look different}
Two design elements explain PE\text{-}MAMoE’s internal behavior. First, PPO style clipped updates with a centralized critic provide stable policy steps after each switch, avoiding the large policy jumps observed for baselines that either over commit or mis-estimate advantages. Second, sparsely gated MoE with load balancing allows the router to reallocate traffic to the most relevant experts for the new phase, rather than forcing a single monolithic network to change all at once; this conditional computation supports fast reconfiguration without global instability. The brief noise injection at switches counteracts primacy bias, enabling re-plasticization without sustained action noise.

\paragraph{Link to higher level performance}
The action dynamics match system outcomes reported earlier: low collision spikes (Fig.~\ref{fig:collisions}), smooth energy usage (Fig.~\ref{fig:energy}), and rapid recovery in served users and returns (Figs.~\ref{fig:sys_serve_users} and Fig.~\ref{fig:reward}). In reliable RL evaluation, such stability after distribution shifts is reflected by robust aggregates like IQM rather than single run peaks~\cite{Agarwal2021Rliable}. The observed PE\text{-}MAMoE traces moderate transients at switches and low variance within phases, explain its superior IQM and return, corroborating that sparse MoE routing \& plasticity injection steers internal states toward fast adaptation and stable control.

\subsection{Ablation Study}
\label{subsec:ablation_study}
\subsubsection{Noise Sensitivity Analysis}

We ablate the plasticity injection level by varying the expert noise threshold $\gamma$ and reporting average number of concurrently served users after convergence in each regime. Unless otherwise stated, all other experiments in this paper use $\gamma=0.005$.

\paragraph{Observations (Fig.~\ref{fig:noise})}
PE\text{-}MAMoE exhibits a clear sweet spot: coverage improves as $\gamma$ increases from $0.001$
to $0.005$ and then collapses when $\gamma$ is raised to $0.010$. In contrast, vanilla MoE remains essentially flat across thresholds. The non-monotonic curve for PE\text{-}MAMoE reflects the expected trade off: (i) too little noise fails to overcome primacy bias and dormant neuronal states, yielding under adaptation after phase changes; (ii) moderate noise re-plasticizes the router and experts, enabling rapid specialization and higher steady state coverage; (iii) excessive noise injects instability and harms policy improvement, reducing coverage. This pattern is consistent with prior evidence that periodic perturbations combat primacy bias in DRL, while overly large perturbations degrade performance, and with continual learning results showing that shrink and perturb style variability mitigates plasticity loss but should be carefully tuned. The flat MoE curve further supports that conditional computation alone is insufficient without plasticity maintenance mechanisms.

\begin{figure}[t]
    \centering
    \includegraphics[width=\linewidth]{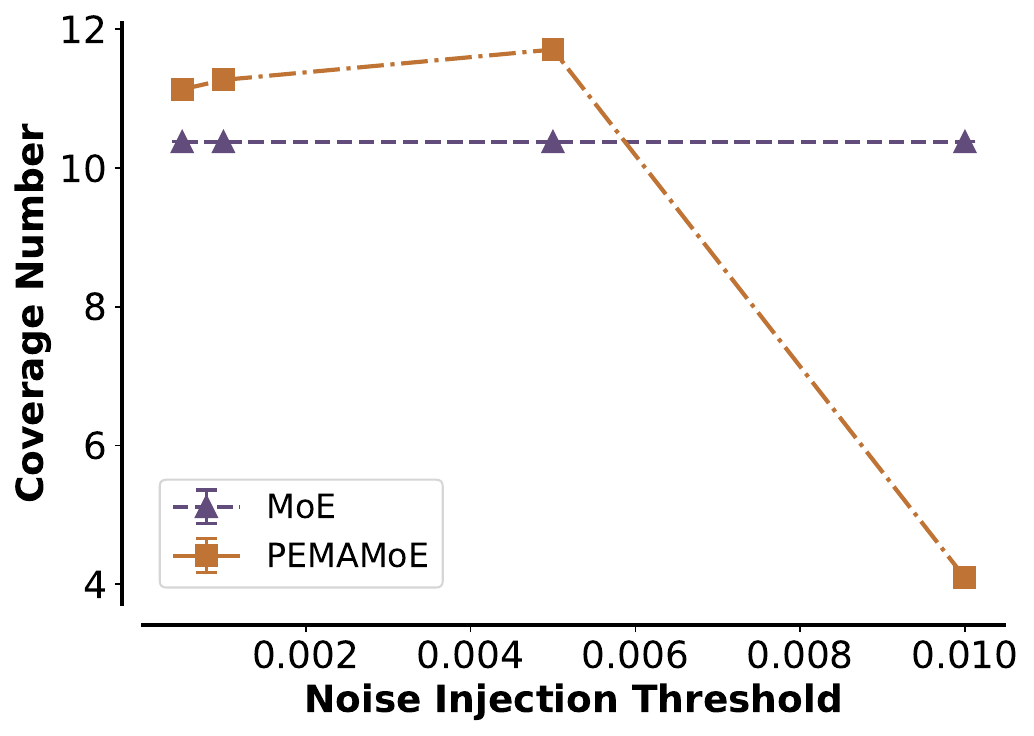}
    \caption{Effect of Noise Injection Threshold on Coverage.}
    \label{fig:noise}
\end{figure}

\subsubsection{Tricks Analysis}
\label{subsubsec:tricks}

Fig.~\ref{fig:tricks} reports the IQM drop in mean return when we remove one component from PEMAMoE at a time. We highlight four implementation choices below.

\begin{figure}[t]
    \centering
    \includegraphics[width=\linewidth]{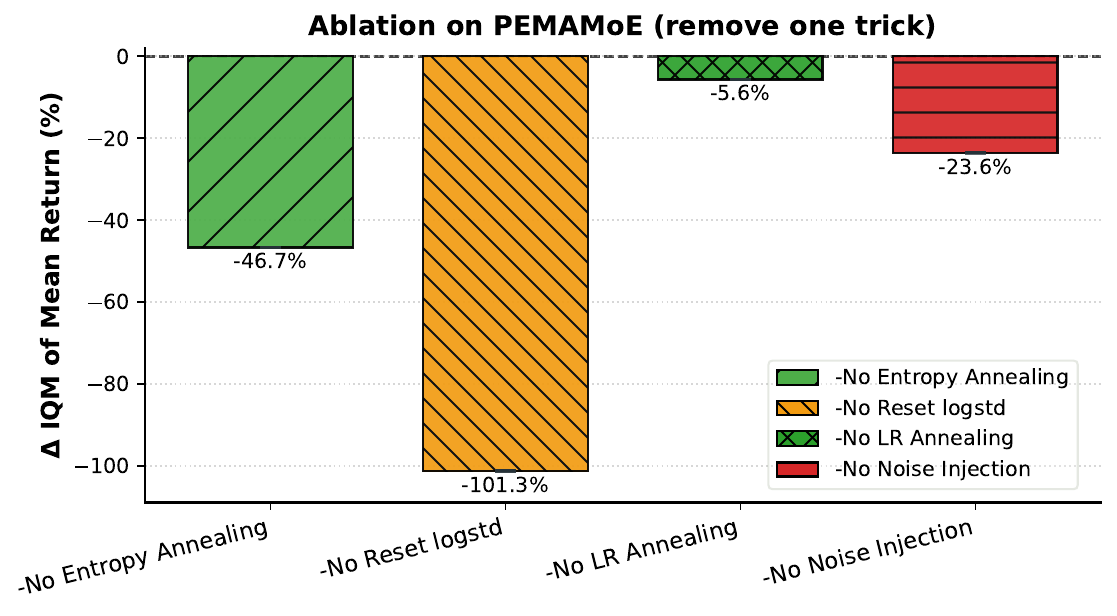}
    \caption{Ablation Study: IQM Drop When Removing Each Component.}
    \label{fig:tricks}
\end{figure}

\paragraph{Entropy Annealing}
We linearly anneal the policy’s entropy bonus within each phase, and reset the entropy coefficient to its initial value right after a phase switch before re-annealing. Removing this schedule causes a substantial performance drop ($-46.7\%$ IQM). Intuitively, early phase entropy helps exploration after regime changes; annealing then tightens the policy for exploitation. 

\paragraph{Reset logstd at switches}
Our continuous action policy resets the Gaussian logstd immediately after each phase switch, then lets it relearn. Without this reset, IQM collapses by $-101.3\%$, the worst among ablations. Resetting combats primacy bias, an early over commitment that hampers later adaptation, by re-opening exploration after abrupt distribution shifts. This matches prior evidence that periodically re-initializing parts of the network helps the agent ``forge" stale priors and adapt faster.

\paragraph{Learning rate (LR) Annealing}
We anneal the LR within each phase and reset them to initial values at switches before re-annealing. Removing LR annealing yields a smaller but non-negligible degradation ($-5.6\%$ IQM), aligning with the common use of LR schedules to stabilize late stage updates and improve convergence across tasks. In our non-stationary setting, resets at phase boundaries prevent under or over shoot caused by stale step sizes.

\paragraph{Noise Injection (plasticity)}
We inject controlled Gaussian noise into expert weights only for the first $N$ epochs after a switch (then off), with a conservative default $\gamma{=}0.005$ for this ablation. Disabling noise leads to a $-23.6\%$ IQM drop. Moderate perturbations re-plasticize experts and the gate, avoiding dormant specialists and enabling rapid re-specialization; excessive noise would harm stability, hence our short post switch window and small magnitude. This aligns with prior results showing that reset and perturb strategies mitigate primacy bias but must be dosed carefully.

All four components contribute, but Reset logstd and Entropy Annealing with reset are the most critical mechanisms for fast, stable re-adaptation after regime switches; Noise Injection adds a further, complementary gain by restoring plasticity in MoE experts. Together with LR annealing, these mechanisms keep PEMAMoE responsive right after a switch and conservative once a phase has stabilized, precisely the reactivity--stability balance needed in non-stationary UAV-ECNs.

\paragraph{Interaction effects among components}
The individual IQM drops sum to approximately $177\%$, far exceeding $100\%$, which indicates strong positive interaction among the four mechanisms rather than independent, additive contributions. This is expected because the components address overlapping but distinct facets of the same post-switch adaptation problem. Reset logstd and Entropy Annealing both re-open exploration after a switch, but through different channels: logstd reset directly widens the action distribution, while entropy annealing controls the policy-gradient regularizer. When both are present, each can operate at a moderate level; removing either forces the remaining one to compensate for the full exploration burden, which it cannot do alone, hence the super-additive drops. Noise Injection and Reset logstd are complementary rather than redundant: logstd reset re-opens action-space exploration (what the policy samples), whereas noise injection re-diversifies the expert feature representations (what the network can express). Removing noise injection while logstd reset is active still causes a $-23.6\%$ drop, confirming that action-level exploration alone does not prevent the underlying representation collapse that noise injection targets. LR Annealing interacts more weakly ($-5.6\%$), suggesting it plays a supporting role, stabilizing the gradient steps that the other three mechanisms make possible, rather than driving adaptation on its own. In summary, the four components form a layered system: logstd reset and entropy annealing handle \emph{behavioral} exploration, noise injection handles \emph{representational} plasticity, and LR annealing stabilizes the resulting updates. Their joint effect is a design feature, not an artifact, and reflects the multi-faceted nature of the post-switch adaptation challenge.

\section{Conclusions}
\label{sec:conclusion}

We presented PE\text{-}MAMoE, a plasticity enhanced multi agents mixture of experts framework for UAV-assisted emergency communications operating under phase driven non-stationarity. The method combines sparsely gated conditional computation with switch aware, fixed window expert noise injection to preserve plasticity, and adopts MAPPO for stable cooperative learning. Across extensive simulations grounded in 3GPP style A2G channel modeling, PE\text{-}MAMoE consistently achieved the best or near best performance on system level metrics. Internally, we observe phase synchronous signals, periodic peaks in expert feature rank and switch synchronous oscillations in dormant neuron fraction, indicating effective re-plasticization at regime switches without explicit dormant unit reinitialization. 

Although PE\text{-}MAMoE delivers consistent gains under phase driven non-stationarity and is grounded in standardized A2G propagation, several limitations remain and open concrete avenues for improvement. First, our simulator abstracts flight, sensing and actuator dynamics; validating with hardware in the loop or field trials is needed to assess latency, control robustness, and safety envelopes under wind and global navigation satellite system (GNSS) imperfections. Second, while the radio frequency (RF) stack follows 3GPP guidance, interference control is simplified; richer physical layer (PHY), medium access control (MAC) and tighter cross layer learning are left for future work. Third, the current CTDE setting assumes reliable global observations during training and no explicit inter-UAV messaging at execution; extending to communication constrained MARL with learned, bandwidth budgeted messaging is a natural step. Fourth, our plasticity mechanism uses switch aware, fixed window noise injection on experts only; learning the noise schedule online and jointly balancing adaptation and stability under unknown variation budgets remains open, especially in light of primacy bias phenomena in DRL.
Finally, scaling to larger fleets and expert pools raises system concerns; importing routing regularizers and systems techniques from large scale SMoE deployments, together with stronger reliability reporting beyond single point estimates, will be important for safety critical ECNs.
Fifth, our phase model cycles deterministically through three regimes with fixed period $L$. While this is a common benchmarking protocol for switching environments, it represents a relatively benign form of non-stationarity because, in principle, an agent could learn to anticipate switches. We note that (i) the phase index is \emph{not} provided to the agent; agents must infer the current regime solely from observed demand distributions and user spatial patterns, and (ii) the primary challenge studied here is not switch prediction but \emph{representation degradation} (rank collapse and dormant neurons) that accumulates over long training regardless of predictability. Nonetheless, evaluating PE-MAMoE under randomized phase durations, orderings, or entirely unseen regimes is a natural next step that would further test the robustness of the plasticity injection mechanism and is left for future work.

\bibliographystyle{IEEEtran}
\bibliography{reference}

\newpage

\begin{IEEEbiography}[{\includegraphics
[width=1in,height=1.25in,clip, keepaspectratio]{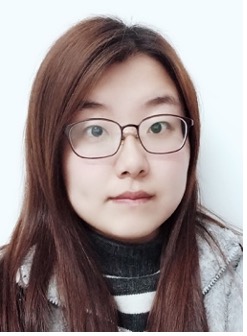}}]
{Wen Qiu} is currently a Postdoctoral Fellow in Kitami Institute of Technology. She received her Ph.D. degree in Co-creative Engineering from Kitami Institute of Technology, Japan. Her research interests include deep reinforcement learning and emergency wireless communication networks, focusing on developing intelligent networking solutions for disaster response scenarios. Her work specifically addresses challenges in network resilience, dynamic resource allocation, and system optimization using advanced machine learning techniques.
\end{IEEEbiography}

\begin{IEEEbiography}[{\includegraphics
[width=1in,height=1.25in,clip,
keepaspectratio]{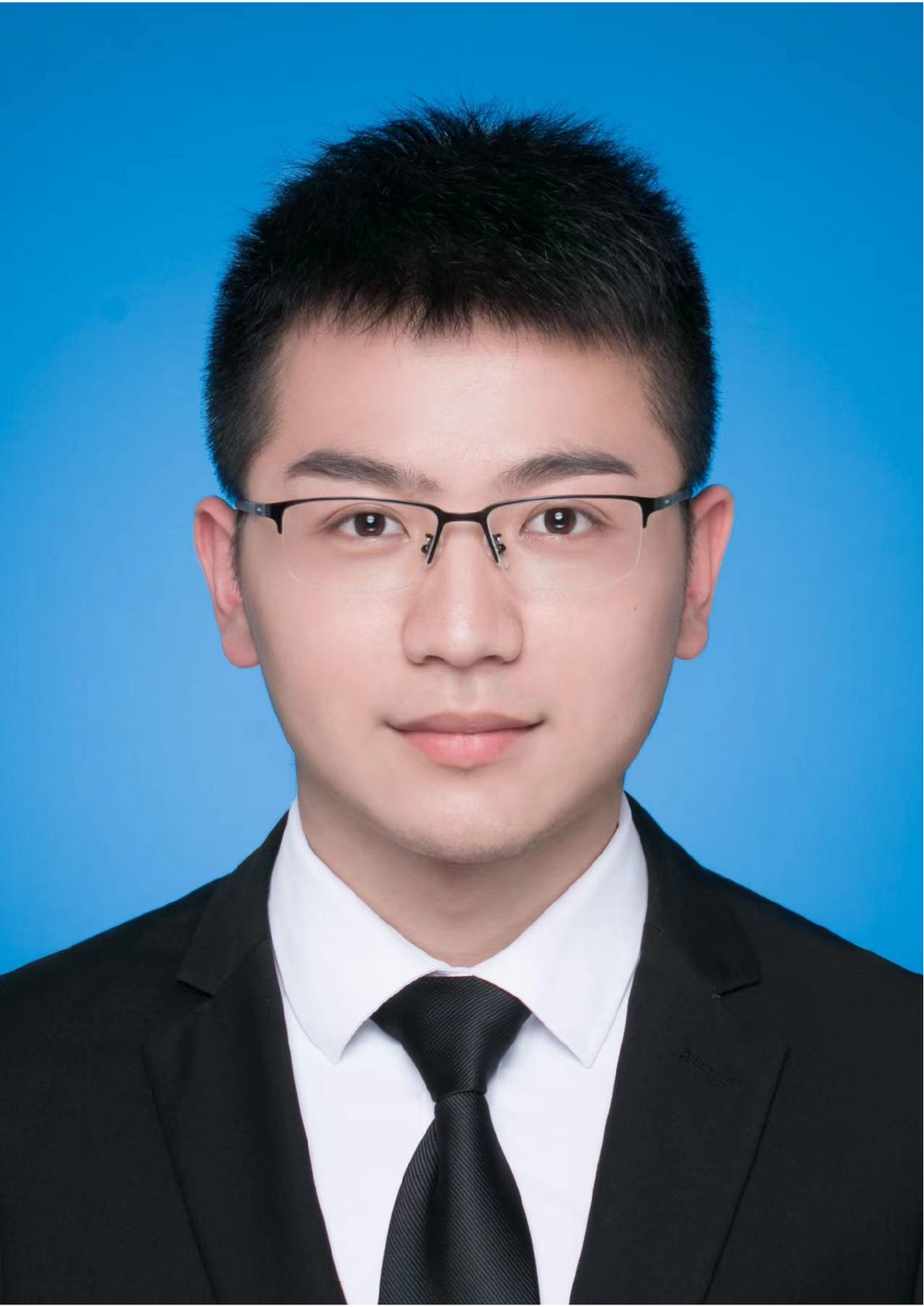}}]
{Zhiqiang He} is currently pursuing a Ph.D. in the University of Electric Communications in Japan. He received his MS degree in Control Science and Engineering from Northeastern University, Shenyang, China. His research interests focus on deep reinforcement learning and its applications. He previously worked at Baidu and InspirAI, where he developed a master-level AI for the game Landlord that outperformed professional players. 
\end{IEEEbiography}

\begin{IEEEbiography}
[{\includegraphics[width=1in,height=1.25in,clip,keepaspectratio]{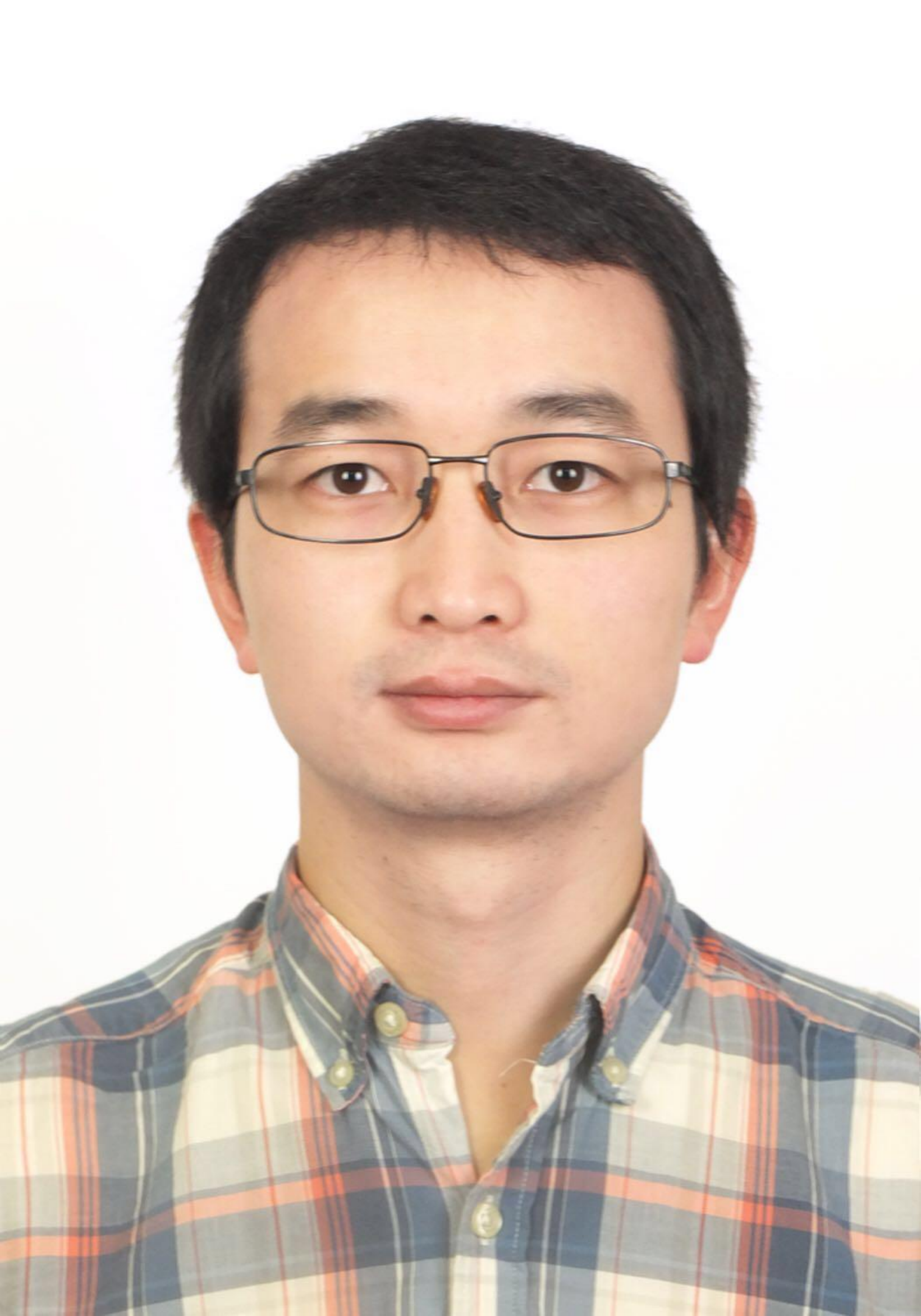}}]
{Wei Zhao} (S’12-M’16) received his Ph.D. degree in the Graduate School of Information Sciences, Tohoku University. He is currently a Professor at the School of Computer Science and Technology, Anhui University of Technology. His research interests include deep reinforcement learning, edge computing, and resource allocation in wireless networks. He was the recipient of the IEEE WCSP-2014 Best Paper Award, and IEEE GLOBECOM-2014 Best Paper Award. He is a member of IEEE.
\end{IEEEbiography}

\begin{IEEEbiography}[{\includegraphics[width=1in,height=1.25in,clip,keepaspectratio]{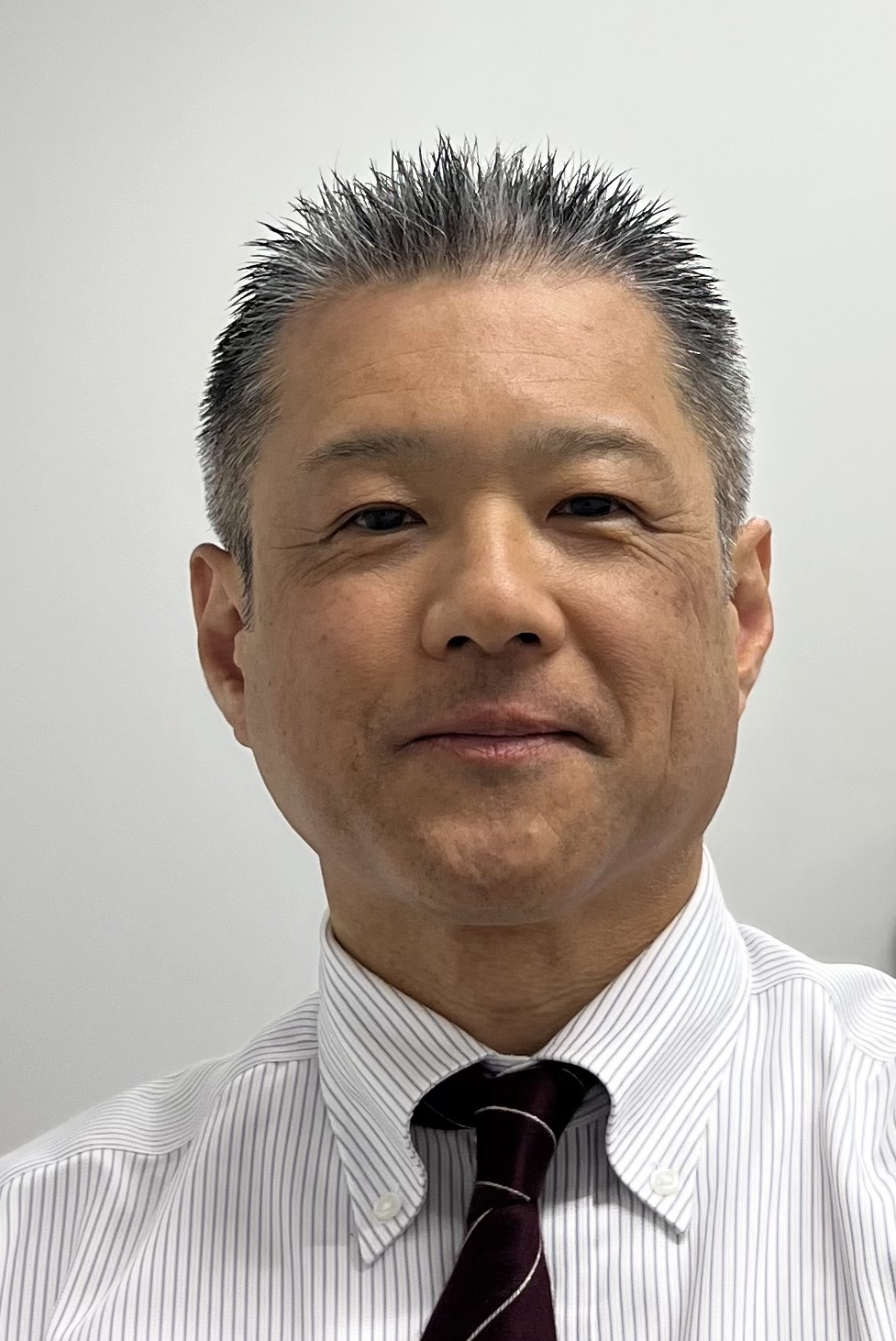}}]{Hiroshi Masui}
received his Ph.D. in Science from Osaka University in 1998. His research interests are theoretical nuclear physics, analysis on public transportation, cloud optimization, data-driven science. He currently works on Information Processing Center of Kitami Institute of Technology.
\end{IEEEbiography}

\end{document}